\newtheorem{TEO}{TEO}
\newtheorem{proposition}[TEO]{Proposition}
\journal{Computational Statistics \& Data Analysis}
\begin{document}

\begin{frontmatter}


\title{The compound class of extended Weibull power series distributions}


\author[label4]{Rodrigo B. Silva\fnref{label1}}
 \address[label4]{Universidade Federal de Pernambuco\\
 Departamento de Estatística, Cidade Universit\'aria, 50740-540 Recife, PE, Brazil
 \vspace{.1cm}}

 \author[label4]{Marcelo B. Pereira\fnref{label2}}

\author[label4]{Cícero R. B. Dias\fnref{label3} }

 \author[label4]{Gauss M. Cordeiro\fnref{label5}}

\fntext[label1]{Email: \texttt{rodrigobs29@gmail.com}}
\fntext[label2]{Email: \texttt{m.p.bourguignon@gmail.com}}
\fntext[label3]{Email: \texttt{cicerorafael@gmail.com}}
\fntext[label5]{Corresponding author. Email: \texttt{gauss@de.ufpe.br}}

\begin{abstract}
In this paper, we introduce a new class of distributions which is obtained by compounding the extended Weibull and power series distributions. The compounding procedure follows the same set-up carried out by Adamidis and Loukas (1998) and defines at least new 68 sub-models. This class includes some well-known mixing distributions, such as the Weibull power series  (Morais and Barreto-Souza, 2010) and exponential power series (Chahkandi and Ganjali, 2009) distributions. Some mathematical properties of the new class are studied including moments and generating function. We provide the density function of the order statistics and obtain their moments. The method of maximum likelihood is used for estimating the model parameters and an EM algorithm is proposed for computing the estimates. Special distributions are investigated in some detail. An application to a real data set is given to show the flexibility and potentiality of the new class of distributions.
\end{abstract}
\begin{keyword}
EM algorithm \sep Extended Weibull distribution \sep Extended Weibull power series distribution \sep Order statistic \sep Power series distribution.
\end{keyword}

\end{frontmatter}

\section{Introduction}

The modeling and analysis of lifetimes is an important aspect of statistical work in a wide variety of scientific and technological fields. Several distributions have been proposed in the literature
to model lifetime data by compounding some useful lifetime distributions. Adamidis and Loukas (1998) introduced a two-parameter exponential-geometric (EG) distribution by compounding an exponential distribution with a geometric distribution. In the same way, the exponential Poisson (EP) and exponential logarithmic (EL) distributions were introduced and studied by Kus (2007) and Tahmasbi and Rezaei (2008), respectively. Recently, Chahkandi and Ganjali (2009) proposed the exponential power series (EPS) fa\-mi\-ly of distributions, which contains as special cases these distributions. Barreto-Souza {\it et al.} (2010) and Lu and Shi (2011) introduced the Weibull-geometric (WG) and Weibull-Poisson (WP) distributions which naturally extend the EG and EP distributions, respectively.  In a very recent paper, Morais and Barreto-Souza (2011) defined the Weibull power series (WPS) class of distributions which contains the EPS distributions as sub-models. The WPS distributions can have an increasing, decreasing and upside down bathtub failure rate function.

Now, consider the class of extended Weibull (EW) distributions, as proposed by Gurvich {\it et al.} (1997), having the cumulative distribution function (cdf)
\begin{equation}\label{extweibull}
G(x;\, \alpha, \boldsymbol{\xi}) = 1 - \mathrm{e}^{-\alpha\, H(x;\, \boldsymbol{\xi})}, \quad x>0, \,\,\, \alpha>0,
\end{equation}
where $H(x;\, \boldsymbol{\xi})$ is a non-negative monotonically increasing function which depends on a pa\-ra\-me\-ter vector $\boldsymbol{\xi}$.  The corresponding probability density function (pdf) is given by
\begin{equation}\label{pdfextweibull}
g(x;\, \alpha, \boldsymbol{\xi}) = \alpha\, h(x;\, \boldsymbol{\xi}) \,\mathrm{e}^{-\alpha \,H(x;\,\boldsymbol{\xi})}, \quad x > 0, \,\,\, \alpha>0,
\end{equation}
where $h(x;\, \boldsymbol{\xi})$ is the derivative of $H(x; \,\boldsymbol{\xi})$.

Note that many well-known models are special cases of equation (\ref{extweibull}) such as:\\

\noindent ({\it i}) $H(x; \boldsymbol{\xi}) = x$ gives the exponential distribution;\\

\noindent ({\it ii}) $H(x; \boldsymbol{\xi}) = x^2$ yields the Rayleigh distribution (Burr type-X distribution);\\

\noindent ({\it iii}) $H(x; \boldsymbol{\xi}) = \log(x/k)$ leads to the Pareto distribution; \\

\noindent ({\it iv}) $H(x; \boldsymbol{\xi}) = \beta^{-1}[\exp(\beta x)-1]$ gives the Gompertz distribution.\\

In this article, we define the extended Weibull power series (EWPS) class of univariate distributions obtained by compounding the extended Weibull and power series distributions. The compounding procedure follows the key idea of Adamidis and Loukas (1998) or, more generally, by Chahkandi and Ganjali (2009) and Morais and Barreto-Souza {\it et al.} (2011). The new class of distributions contains as special models the WPS distributions, which in turn extends the EPS distributions and defines at least new 68 (17 $\times$ 4) sub-models as special cases. The hazard function of our class can be decreasing, increasing, bathtub and upside down bathtub.

We are motivated to introduce the EWPS distributions because of the wide usage of the general class of Weibull distributions and the fact that the current generalization provides means of its continuous extension to still more complex situations.

This paper is organized as follows. In Section 2, we define the EWPS class of distributions and demonstrate that there are many existing models which can be deduced as special cases of the proposed unified model. In Section 3, we provide the density, survival and hazard rate functions and derive some useful expansions. In Section 4, we obtain its quantiles, ordinary and incomplete moments. Further, the order statistics are discussed and their moments are determined. Section 5 deals with reliability and average lifetime. Estimation of the parameters by maximum likelihood using an EM algorithm and large sample inference are investigated in Section 6. In Section 7, we present suitable constraints leading to the maximum entropy characterization of the new class. Three special cases of the proposed class are studied in Section 8. In Section 9, we provide an application to a real data set. The paper is concluded in Section 10.

\section{The new class}

Our class can be derived as follows. Given $N$, let $X_1, \ldots, X_N$ be independent and identically distributed (iid) random variables following (\ref{extweibull}). Here, $N$ is a discrete random variable following a power series distribution (truncated at zero) with probability mass function
\begin{equation}\label{powerseries}
p_n = P(N=n)=\frac{a_n \, \theta^n}{C(\theta)}, n=1,2,\ldots,\\
\end{equation}
where $a_n$ depends only on $n$, $C(\theta) = \sum_{n=1}^{\infty}a_n \,\theta^n$ and $\theta>0$ is such that $C(\theta)$ is finite. Table \ref{table1} summarizes some power series distributions (truncated at zero) defined according to (\ref{powerseries}) such as the Poisson, logarithmic, geometric and binomial distributions. Let $X_{(1)} = \mbox{min}\left\{X_i\right\}^{N}_{i=1}$. The conditional cumulative distribution of $X_{(1)}|N = n$ is given by
\begin{equation*}
G_{X_{(1)}|N=n}(x) = 1-\mathrm{e}^{-n\alpha H(x; \boldsymbol{\xi})},
\end{equation*}
i.e.,  $X_{(1)}|N = n$ follows a general class of distributions~\eqref{extweibull} with parameters $n\alpha$ and $\boldsymbol{\xi}$ based on the same $H(x; \boldsymbol{\xi})$ function. Hence, we obtain
\begin{equation*}
P(X_{(1)} \leq x, N=n) = \frac{a_n\, \theta^n}{C(\theta)}\left[1-\mathrm{e}^{-n\alpha H(x; \boldsymbol{\xi})}\right], \quad x>0,
\quad n \geq 1.
\end{equation*}

The EWPS class of distributions can then be defined by the marginal cdf of $X_{(1)}$:

\begin{equation}\label{cdf}
F(x;\theta,\alpha, \boldsymbol{\xi}) = 1-\frac{C(\theta \,
\mathrm{e}^{-\alpha H(x; \boldsymbol{\xi})})}{C(\theta)}, \quad x>0.
\end{equation}

\begin{table}[!htbp]
\centering
\footnotesize
\begin{tabular}{llllllllllll}
 \toprule
Distribution && $a_n$&& $C(\theta)$ &&
$C'(\theta)$&&$C''(\theta)$&&$C(\theta)^{-1}$&$\Theta$\\
 \midrule
Poisson&&$n!^{-1}$ && $\mathrm{e}^{\theta}-1$&&$e^{\theta}$&&$e^{\theta}$&&$\log(\theta+1)$&$\theta \in (0, \infty)$\\
Logarithmic&&$n^{-1}$&&$-\log(1-\theta)$&&$(1-\theta)^{-1}$&&$(1-\theta)^{-2}$&&$1-\mathrm{e}^{-\theta}$&$\theta \in (0,1)$\\
Geometric&&1&&$\theta(1-\theta)^{-1}$&&$(1-\theta)^{-2}$&&$2(1-\theta)^{-3}$&&$\theta(\theta+1)^{-1}$&$\theta \in (0,1)$\\
Binomial&&$\binom{m}{n}$&&$(\theta+1)^{m} -
1$&&$m(\theta+1)^{m-1}$&&$\frac{m(m-1)}{(\theta+1)^{2-m}}$&&$(\theta-1)^{1/m}-1$&$\theta \in (0, 1)$\\
\bottomrule
\end{tabular}
\caption{Useful quantities for some power series distributions.}\label{table1}
\end{table}

\begin{sidewaystable}[!htbp]
\centering \footnotesize
\begin{tabular}{l@{            }l@{}llll} \toprule
Distribution & $H(x;\boldsymbol{\xi})$ & $h(x;\boldsymbol{\xi})$ & $\alpha$&$\boldsymbol{\xi}$&References \\
\midrule
Exponential ($x\geq0$)&$x$&1&$\alpha$&$\emptyset$&Johnson \emph{et al}. (1994) \\
Pareto ($x\geq k$)&$\log(x/k)$&$1/x$&$\alpha$&$k$&Johnson \emph{et al}. (1994)\\
Rayleigh ($x\geq0$) &$x^2$&$2x$&$\alpha$&$\emptyset$&Rayleigh (1880) \\
Weibull ($x\geq0$)&$x^{\gamma}$&$\gamma x^{\gamma-1}$&$\alpha$&$\gamma$&Johnson \emph{et al}. (1994) \\
Modified Weibull  ($x\geq0$) &$x^{\gamma}\exp(\lambda x)$ &
$x^{\gamma - 1}\exp(\lambda
x)(\gamma+\lambda x)$&$\alpha$&$[\gamma,\, \lambda]$&Lai \emph{et al}. (2003) \\
Weibull extension ($x\geq0$)&
$\lambda[\exp(x/\lambda)^{\beta}-1]$&$\beta\exp(x/\lambda)^{\beta}(x/\lambda)^{\beta-1}$&$\alpha$&$[\gamma,\, \lambda,\,\beta]$&Xie \emph{et al}. (2002) \\
Log-Weibull ($-\infty<x<\infty$)&$\exp[(x-\mu)/\sigma]$&$(1/\sigma)\exp[(x-\mu)/\sigma]$&1&$[\mu,\, \sigma]$&White (1969)  \\
Phani ($0<\mu < x<\sigma<\infty$)& $[(x-\mu)/(\sigma-x)]^{\beta}$&$\beta[(x-\mu)/(\sigma-x)]^{\beta-1}[(\sigma-\mu)/(\sigma-t)^2]$&$\alpha$&$[\mu,\, \sigma,\,\beta]$&Phani (1987)\\
Weibull Kies ($0<\mu < x<\sigma<\infty$)&$(x-\mu)^{\beta_1}/(\sigma-x)^{\beta_2}$&$(x-\mu)^{\beta_1 -1}(\sigma-x)^{-\beta_2 - 1}[\beta_1(\sigma-x)+\beta_2(x-\mu)]$&$\alpha$&$[\mu,\,\sigma,\,\beta_1,\,\beta_2]$&Kies (1958)\\
Additive Weibull ($x\geq0$)&$(x/\beta_1)^{\alpha_1} + (x/\beta_2)^{\alpha_2}$&$(\alpha_1/\beta_1)(x/\beta_1)^{\alpha_1 - 1} + (\alpha_2/\beta_2)(x/\beta_2)^{\alpha_2 - 1}$&1&$[\alpha_1,\,\alpha_2,\,\beta_1,\,\beta_2]$&Xie and Lai (1995)\\
Traditional Weibull ($x\geq0$)&$x^{b}[\exp(cx^{d}-1)]$&$b x^{b - 1}[\exp(cx^{d})-1] + cdx^{b + d -1}\exp(cx^d)$&$\alpha$&$[b,\,
c,\,d]$&Nadarajah and
Kotz (2005) \\
Gen. power Weibull ($x\geq0$)&$[1+(x/\beta)^{\alpha_1}]^{\theta} - 1$  &$(\theta\alpha_1/\beta)[1+(x/\beta)^{\alpha_1}]^{\theta-1}(x/\beta)^{\alpha_1}$&1&$[\alpha_1,\, \beta,\,\theta]$&Nikulin and Haghighi (2006) \\
Flexible Weibull extension($x\geq0$)&$\exp(\alpha_1x-\beta/x)$&$\exp(\alpha_1x-\beta/x)(\alpha_1+\beta/x^2)
$&1&$[\alpha_1,\,\beta]$&Bebbington \emph{et al}. (2007) \\
Gompertz ($x\geq0$)&$\beta^{-1}[\exp(\beta x) - 1]$&$\exp(\beta x)$&$\alpha$&$\beta$&Gompertz (1825) \\
Exponential power ($x\geq0$)&$\exp[(\lambda x)^{\beta}]-1$&$\beta\lambda\exp[(\lambda x)^{\beta}](\lambda x)^{\beta-1}$&1&$[\lambda,\,\beta]$&Smith and Bain (1975)\\
Chen ($x\geq0$)&$\exp(x^{b})-1$&$b x^{b-1}\exp(x^b)$&$\alpha$&$b$&Chen (2000) \\
Pham ($x\geq0$)&$ (a^{x})^{\beta} - 1$&$\beta(a^{x})^{\beta}\log(a)$&1&$[a,\,\beta]$&Pham (2002)\\
\bottomrule
\end{tabular}
\caption{Special distributions and related $H(x;\,\boldsymbol{\xi})$ and $h(x;\,\boldsymbol{\xi})$ functions.}
\label{table2}
\end{sidewaystable}

The random variable $X$ following (\ref{cdf}) with parameters $\theta$ and $\alpha$ and the vector $\boldsymbol{\xi}$ of parameters is denoted by $X\sim \mbox{EWPS}(\theta,\alpha, \boldsymbol{\xi})$. Equation (\ref{cdf}) extends several distributions which have been studied in the literature. The EG distribution (Adamidis and Loukas, 1998) is obtained by taking $H(x;\,\boldsymbol{\xi})=x$ and $C(\theta) = \theta(1-\theta)^{-1}$ with $\theta \in (0,1)$. Further, for $H(x;\,\boldsymbol{\xi})=x$, we obtain the EP (Kus, 2007) and EL  (Tahmasbi and Rezaei, 2008) distributions by taking $C(\theta) = \mathrm{e}^{\theta}-1, \theta >0$, and $C(\theta) = -\log(1-\theta), \theta \in (0,1)$, respectively. In the same way,
for $H(x;\,\boldsymbol{\xi})=x^\gamma$, we obtain the WG (Barreto-Souza {\it et al}., 2009) and WP (Lu and Shi, 2011) distributions. The EPS distributions are obtained from (\ref{cdf})
by mixing $H(x;\,\boldsymbol{\xi})=x$ with any $C(\theta)$ listed in Table \ref{table1} (see Chahkandi and Ganjali, 2009). Finally, we obtain the WPS distributions from (\ref{cdf}) by compounding $H(x;\,\boldsymbol{\xi})=x^\gamma$ with any $C(\theta)$ in Table \ref{table1} (see Morais and Barreto-Souza, 2011). Table \ref{table2} displays some useful quantities and respective parameter vectors for each particular distribution.


\section{Density, survival and hazard functions}

The density function associated to (\ref{cdf}) is given by
\begin{equation}\label{pdf}
f(x;\theta,\alpha, \boldsymbol{\xi}) = \theta \, \alpha \,
h(x;\boldsymbol{\xi})\, \mathrm{e}^{-\alpha H(x;\,
\boldsymbol{\xi})}\, \frac{C'(\theta \, \mathrm{e}^{-\alpha H(x;\,
\boldsymbol{\xi})})}{C(\theta)}, \quad x>0.
\end{equation}

\begin{proposition}\label{prop1}
The EW class of distributions with parameters $c\alpha$ and $\boldsymbol{\xi}$ is a limiting special case of the EWPS class of distributions when $\theta \rightarrow 0^+$, where $c= \min\left\{n \in \mathbb{N}: a_n >0\right\}$.
\end{proposition}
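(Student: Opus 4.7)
The plan is to pass directly to the limit in the cdf \eqref{cdf} and identify the limiting form with an EW cdf. Recall that $C(\theta)=\sum_{n=1}^{\infty} a_n \theta^n$, and by definition of $c$ we have $a_n=0$ for $n<c$ and $a_c>0$. Thus I can factor out $\theta^c$ and write
\begin{equation*}
C(\theta)=\theta^{c}\Bigl(a_{c}+\sum_{n=c+1}^{\infty}a_{n}\,\theta^{n-c}\Bigr)=\theta^{c}\bigl(a_{c}+o(1)\bigr)\qquad\text{as }\theta\to 0^{+},
\end{equation*}
and analogously, replacing the argument by $\theta\,\mathrm{e}^{-\alpha H(x;\boldsymbol{\xi})}$ (which also tends to $0^{+}$ since $H(x;\boldsymbol{\xi})\geq 0$),
\begin{equation*}
C\bigl(\theta\,\mathrm{e}^{-\alpha H(x;\boldsymbol{\xi})}\bigr)=\theta^{c}\,\mathrm{e}^{-c\alpha H(x;\boldsymbol{\xi})}\bigl(a_{c}+o(1)\bigr).
\end{equation*}

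Forming the ratio in \eqref{cdf} cancels the factor $a_c\theta^c$, yielding
\begin{equation*}
\lim_{\theta\to 0^{+}}\frac{C\bigl(\theta\,\mathrm{e}^{-\alpha H(x;\boldsymbol{\xi})}\bigr)}{C(\theta)}=\mathrm{e}^{-c\alpha H(x;\boldsymbol{\xi})},
\end{equation*}
so $\lim_{\theta\to 0^{+}}F(x;\theta,\alpha,\boldsymbol{\xi})=1-\mathrm{e}^{-c\alpha H(x;\boldsymbol{\xi})}$, which is precisely the EW cdf \eqref{extweibull} with scale parameter $c\alpha$ and shape vector $\boldsymbol{\xi}$.

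This is essentially a leading-order Taylor argument; the only mild subtlety is making sure the series remainder $\sum_{n>c}a_n\theta^{n-c}$ vanishes uniformly in the relevant sense, but this is immediate from absolute convergence of $C$ on a neighbourhood of $0$ (guaranteed by the assumption that $C(\theta)$ is finite for $\theta$ in the admissible range). Hence there is no real obstacle; the minimality of $c$ is the single ingredient that selects the exponent $c\alpha$ appearing in the limiting EW distribution.
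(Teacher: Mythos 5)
Your proof is correct and follows essentially the same route as the paper's: both arguments factor the leading term $a_c\theta^c$ out of numerator and denominator of the ratio $C(\theta\,\mathrm{e}^{-\alpha H})/C(\theta)$ and observe that the remaining tail series vanish as $\theta\to 0^{+}$, leaving $\mathrm{e}^{-c\alpha H(x;\boldsymbol{\xi})}$. The only difference is cosmetic (your $o(1)$ bookkeeping versus the paper's explicit division by $a_c\theta^c$), so no further comment is needed.
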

\begin{proof}
This proof uses a similar argument to that found in Morais and Barreto-Souza (2011). Define $c= \min\left\{n \in \mathbb{N}: a_n >0\right\}$. We have
\begin{align*}
\lim_{\theta \rightarrow 0^+} F(x) &= 1 - \lim_{\theta \rightarrow 0^+} \frac{\displaystyle{\sum_{n=c}^\infty a_n\left(\theta \, \mathrm{e}^{-\alpha H(x; \boldsymbol{\xi})}\right)^n}}{\displaystyle{\sum_{n=c}^\infty a_n\, \theta^n}}\\
&= 1 - \lim_{\theta \rightarrow 0^+} \frac{\displaystyle{\mathrm{e}^{-c \alpha H(x; \boldsymbol{\xi})} + a_c^{-1} \sum_{n=c+1}^\infty a_n \,\theta^{n-c}\mathrm{e}^{-n\alpha H(x; \boldsymbol{\xi})}}}{\displaystyle{1 + a_c^{-1}\sum_{n=c+1}^\infty a_n\, \theta^{n-c}}}\\
&= 1-\mathrm{e}^{-c\alpha H(x; \boldsymbol{\xi})},
\end{align*}
for $x>0$.
\end{proof}

We now provide an interesting expansion for ($\ref{pdf}$). We have $C'(\theta) = \sum_{n=1}^{\infty}n\, a_n\, \theta^{n-1}$. By using this result in (\ref{pdf}), it follows that
\begin{equation}\label{linearcomb}
f(x;\theta,\alpha, \boldsymbol{\xi}) = \sum_{n=1}^{\infty}p_n\,
g(x;\, n\alpha, \xi),
\end{equation}
where $g(x;\, n\alpha, \boldsymbol{\xi})$  is given by (\ref{pdfextweibull}). Based on equation~\eqref{linearcomb}, we obtain
\begin{equation*}
F(x;\theta,\alpha, \boldsymbol{\xi}) = 1 - \sum_{n=1}^\infty p_n\,
\mathrm{e}^{-n\alpha H(x;\, \boldsymbol{\xi})}.
\end{equation*}

Hence, the EWPS density function is an infinite mixture of EW densities. So, some mathematical quantities (such as ordinary and incomplete moments, generating function and mean deviations) of the EWPS distributions can be obtained by knowing those quantities for the baseline density function $g(x;\, n\alpha, \boldsymbol{\xi})$.

The EWPS survival function is given by
\begin{equation}\label{survewps}
S(x; \theta, \alpha, \boldsymbol{\xi})= \frac{C(\theta \,\mathrm{e}^{-\alpha H(x;\,\boldsymbol{\xi})})}{C(\theta)}
\end{equation}
and the corresponding hazard rate function becomes
\begin{equation}
\tau(x; \theta, \alpha, \boldsymbol{\xi}) = \theta \alpha\, h(x;\,
\boldsymbol{\xi}) \, \mathrm{e}^{-n\alpha H(x; \boldsymbol{\xi})}
\,\frac{C'(\theta \, \mathrm{e}^{-\alpha H(x;\,
\boldsymbol{\xi})})}{C(\theta \, \mathrm{e}^{-\alpha H(x;\,
\boldsymbol{\xi})})}.
\end{equation}

\section{Quantiles, moments and order statistics}

The EWPS distributions are easily simulated from (\ref{cdf}) as follows: if $U$ has a uniform $U(0,1)$ distribution, then the solution of the nonlinear equation
\begin{eqnarray*}
X = H^{-1}\left\{-\frac{1}{\alpha}\log\left[\frac{C^{-1}(C(\theta)(1-U))}{\theta}\right]\right\}
\end{eqnarray*}
has the EWPS$(\theta,\alpha, \boldsymbol{\xi})$ distribution, where
$H^{-1}(\cdot)$ and $C^{-1}(\cdot)$ are the inverse functions of
$H(\cdot)$ and $C(\cdot)$, respectively. To simulate data from this
nonlinear equation, we can use the matrix programming language Ox
through {\it SolveNLE} subroutine (see Doornik, 2007).

We now derive a general expression for the $r$th raw moment of $X$, which may be determined by using (\ref{linearcomb}) and the monotone convergence theorem. So, for $r \in \mathbb{N}$, we obtain
\begin{equation}
\operatorname{E}(X^r) = \sum_{n=1}^{\infty}p_n\, \operatorname{E}(Z^r),\\
\end{equation}
where $Z$ is a random variable with pdf $g(z; n\alpha,
\boldsymbol{\xi})$.

The incomplete moments and moment generating function (mgf) follow by using (\ref{linearcomb}) and the monotone convergence theorem:
\begin{align*}
I_X(y) &= \int^{y}_{0}x^r\, f(x)dx  = \sum_{n=1}^{\infty}p_n\, I_{Z}(y)\\
\intertext{and} M_{X}(t) &=
\sum_{n=1}^{\infty}p_n\,\operatorname{E}\left(\mathrm{e}^{tZ}\right).
\end{align*}
where $Z$ is defined as before.

Order statistics are among the most fundamental tools in non-parametric statistics and inference. They enter in the problems of estimation and hypothesis tests in a variety of ways. Therefore, we now discuss some properties of the order statistics for the proposed class of distributions. The pdf $f_{i:m}(x)$ of the $i$th order statistic for a random sample $X_1, \ldots, X_m$ from the EWPS distribution is given by
\begin{equation}\label{osewps}
f_{i:m}(x) = \frac{m!}{(i-1)!(m-i)!}f(x;\theta,\alpha, \boldsymbol{\xi})\left[1-\frac{C(\theta\, \mathrm{e}^{-\alpha H(x;\,
\boldsymbol{\xi})})}{C(\theta)}\right]^{i-1}\left[\frac{C(\theta
\mathrm{e}^{-\alpha H(x;\,
\boldsymbol{\xi})})}{C(\theta)}\right]^{m-i}, \quad x> 0,
\end{equation}
where $f (x; \theta, \alpha, \boldsymbol{\xi})$ is the pdf given by (\ref{pdf}). By using the binomial expansion, we can write (\ref{osewps}) as
\begin{equation}\label{osewpsexpansion}
f_{i:m}(x) = \frac{m!}{(i-1)!(m-i)!}f(x;\theta,\alpha, \boldsymbol{\xi})\sum_{j=0}^{i-1} (-1)^j \,\binom{i-1}{j}\, S(x;\theta,\alpha,
\boldsymbol{\xi})^{m+j-i},
\end{equation}
where $S(x; \theta, \alpha, \boldsymbol{\xi})$ is given by~\eqref{survewps}. The corresponding cumulative function is
\begin{equation*}
F_{i:m}(x) = \sum_{j=0}^{\infty}\sum_{k=i}^{m}(-1)^j\,\binom{k}{j}\,\binom{m}{k}\,S(x;\theta,\alpha,
\boldsymbol{\xi})^{m+j-k}.
\end{equation*}

An alternative form for~\eqref{osewps} can be obtained from~\eqref{linearcomb} as
\begin{equation}\label{altosewps}
f_{i:m}(x) = \frac{m!}{(i-1)! (m-i)!} \sum_{n=1}^\infty \sum_{j=0}^{i-1} \omega_j\, p_n\, g(x; n\alpha,
\boldsymbol{\xi})S(x;\theta,\alpha, \boldsymbol{\xi})^{m+j-1},
\end{equation}
where $\omega_j = (-1)^j \binom{i-1}{j}$. So, the $s$th raw moment $X_{i:m}$ comes immediately from the above equation
\begin{equation}\label{eosewps}
\operatorname{E}\left(X_{i:m}^s\right) = \frac{m!}{(i-1)! (m-i)!}
\sum_{n=1}^\infty \sum_{j=0}^{i-1} \omega_j\, p_n\,
\operatorname{E}\left[Z^s S(Z)^{m+j-i}\right],
\end{equation}
where $Z\sim \mbox{EW}(n\alpha, \boldsymbol{\xi})$ is defined before.
\section{Reliability and average lifetime}

In the context of reliability, the stress-strength model describes the life of a component which has a random strength $X$ subjected to a random stress $Y$. The component fails at the instant that the stress applied to it exceeds the strength, and the component will function satisfactorily whenever $X > Y$. Hence, $R = \operatorname{P}(X > Y)$ is a measure of component reliability. It has many applications, especially in engineering concepts. The algebraic form for R has been worked out for the majority of the well-known distributions. Here, we obtain the form for the reliability $R$ when $X$ and $Y$ are independent random variables having the same EWPS distribution.

The quantity $R$ can be expressed as
\begin{equation}\label{eqR}
R = \int_0^{\infty}f(x; \theta, \alpha, \boldsymbol{\xi})F(x;\theta,\alpha, \boldsymbol{\xi}) dx.
\end{equation}

Substituting (\ref{cdf}) and (\ref{pdf}) into equation (\ref{eqR}), we obtain
\begin{eqnarray*}\label{eqR2}
R &=& \int_{0}^{\infty}\theta \, \alpha \,
h(x;\boldsymbol{\xi})\, \mathrm{e}^{-\alpha H(x;\,
\boldsymbol{\xi})}\, \frac{C'(\theta \mathrm{e}^{-\alpha H(x;\,
\boldsymbol{\xi})})}{C(\theta)}\left[1-\frac{C(\theta
\mathrm{e}^{-\alpha H(x; \boldsymbol{\xi})})}{C(\theta)}\right] dx\\
&=&1 - \sum_{n=1}^{\infty}p_n\int_{0}^{\infty}
g(x;n\alpha,\boldsymbol{\xi})S(x;\theta,\alpha, \boldsymbol{\xi})dx,
\end{eqnarray*}
where the integral can be calculated from the baseline EW distribution.

The average lifetime is given by
\begin{equation*}
t_m =
\sum_{n=1}^{\infty}p_n\int\limits_{0}^{\infty}\operatorname{e}^{-n\alpha
H(x;\, \boldsymbol{\xi})}dx.
\end{equation*}

Given that there was no failure prior to $x_0$, the residual life is the period from time $x_0$ until the time of failure. The mean residual lifetime can be expressed as
\begin{eqnarray*}
m(x_0;\theta,\alpha,\boldsymbol{\xi}) &=& \left[\operatorname{Pr}(X>x_0)\right]^{-1}\int\limits_{0}^{\infty}y\,f(x_0+y;\theta,\alpha,\boldsymbol{\xi}) dy\\
&=&[S(x_0)]^{-1}\sum_{n=1}^{\infty}p_n\int\limits_{0}^{\infty}y\,g(x_0+y;n\alpha,\boldsymbol{\xi}) dy.
\end{eqnarray*}

The last integral can be computed from the baseline EW distribution. Furthermore, $m(x_0;\theta,\alpha,\boldsymbol{\xi}) \rightarrow \operatorname{E}(X)$ as $x_0 \rightarrow 0$.


\section{Maximum likelihood estimation}

\subsection{Preliminaries}
\vskip3mm Here, we determine the maximum likelihood estimates (MLEs) of the parameters of the EWPS class of distributions from complete samples only. Let $X_1, \ldots, X_n$ be a random sample with observed values $x_1, \ldots, x_n$ from an EWPS distribution with parameters $\theta, \alpha$ and $\boldsymbol{\xi}$. Let $\Theta= (\theta,\alpha, \boldsymbol{\xi})^\top$ be the $p \times 1$ parameter vector. The total log-likelihood function is given by
\begin{eqnarray}\label{loglik} \nonumber
\ell_n &=& \ell_{n}(x; \Theta) = n\left[\log\theta + \log\alpha - \log C(\theta)\right] - \alpha\sum_{i=1}^{n}H(x_i;\, \boldsymbol{\xi}) +
\sum_{i=1}^{n}\log h(x_i;\, \boldsymbol{\xi})\\
 &+& \sum_{i=1}^{n}\log C'(\theta\, \mathrm{e}^{-\alpha H(x_i;\, \boldsymbol{\xi})}).
\end{eqnarray}

The log-likelihood can be maximized either directly by using the SAS (PROC NLMIXED) or the Ox program (sub-routine MaxBFGS) (see Doornik, 2007) or by solving the nonlinear likelihood
equations obtained by differentiating (\ref{loglik}). The components of the score function $U_n(\Theta) = \left(\partial \ell_n/\partial \theta, \partial \ell_n/\partial \alpha, \partial \ell_n/\partial \xi\right)^\top$ are
\begin{align*}
\frac{\partial \ell_n}{\partial \alpha} &= \frac{n}{\alpha} - \sum_{i=1}^{n} H(x_i;\, \boldsymbol{\xi}) - \theta\sum_{i=1}^n H(x_i;\, \boldsymbol{\xi}) \mathrm{e}^{-\alpha H(x_i;\, \boldsymbol{\xi})}\,\frac{C''(\theta\, \mathrm{e}^{-\alpha H(x_i;\, \boldsymbol{\xi})})}{C'(\theta\, \mathrm{e}^{-\alpha H(x_i;\, \boldsymbol{\xi})})},\\
\frac{\partial \ell_n}{\partial \theta} &= \frac{n}{\theta} - n\frac{C'(\theta)}{C(\theta)} + \sum_{i=1}^n \mathrm{e}^{-\alpha H(x_i;\, \boldsymbol{\xi})}\,\frac{C''(\theta\, \mathrm{e}^{-\alpha H(x_i;\, \boldsymbol{\xi})})}{C'(\theta\, \mathrm{e}^{-\alpha H(x_i;\, \boldsymbol{\xi})})}\\
\intertext{and} \frac{\partial \ell_n}{\partial \boldsymbol{\xi}_k} &=
\sum_{i=1}^{n} \frac{\partial \log h(x_i;\, \boldsymbol{\xi})}{\partial \boldsymbol{\xi}_k} - \alpha \sum_{i=1}^{n} \frac{\partial H(x_i;\, \xi)}{\partial
\boldsymbol{\xi}_k}\left[1 + \theta \mathrm{e}^{-\alpha H(x_i; \, \boldsymbol{\xi})} \frac{C''(\theta\, \mathrm{e}^{-\alpha H(x_i;\,\boldsymbol{\xi})})}{C'(\theta\, \mathrm{e}^{-\alpha H(x_i;\,
\boldsymbol{\xi})})}\right].
\end{align*}

For interval estimation on the model parameters, we require the observed information matrix
\[ J_n(\Theta)=-\left( \begin{array}{cccc}
U_{\theta\theta} & U_{\theta\alpha} & |&U_{\theta\boldsymbol{\xi}}^\top \\
U_{\alpha\theta} & U_{\alpha\alpha} & |&U_{\alpha\boldsymbol{\xi}}^\top \\
-- & --& --& --\\
U_{\theta\boldsymbol{\xi}} & U_{\alpha\boldsymbol{\xi}} & | & U_{\boldsymbol{\xi}\boldsymbol{\xi}} \end{array} \right),\]
whose elements are listed in \ref{apA}. Let $\widehat{\Theta}$ be the MLE of $\Theta$. Under standard regular conditions stated in Cox and Hinkley (1974) that are fulfilled for our model whenever the parameters are in the interior of the parameter space, we have that the asymptotic distribution of $\sqrt{n}\left(\widehat{\Theta} - \Theta\right)$ is multivariate normal $N_p(0, K(\Theta)^{-1})$, where $K(\Theta) = \lim_{n \rightarrow \infty}J_n(\Theta)$ is the unit information matrix and $p$ is the number of parameters of the compounded distribution.

\subsection{The EM algorithm}
\vskip3mm
Here, we propose an EM algorithm (Dempster {\it et al.}, 1977) to estimate $\Theta$.
The EM algorithm is a recurrent method such that each step consists of an estimate
of the expected value of a hypothetical random va\-ria\-ble and then maximizes the log-likelihood for the complete data. Let the complete-data be $X_1, \ldots, X_n$ with observed values $x_1, \ldots, x_n$ and the hypothetical random va\-ria\-bles $Z_1, \ldots, Z_n$. The joint probability function is such that the marginal density of $X_1, \ldots, X_n$ is the likelihood of interest. Then, we define a hypothetical complete-data distribution for each $(X_i, Z_i)^\top, i=1, \ldots, n$, with a joint probability function in the form
\begin{equation*}
g(x, z; \Theta) = \frac{\alpha\, z\, a_z\, \theta^z}{C(\theta)}\,h(x;\, \xi)\, \mathrm{e}^{-\alpha z H(x;\, \xi)},
\end{equation*}
where $\theta$ and $\alpha$ are positive, $x>0$ and $z \in \mathbb{N}$. Under this formulation, the E-step of an EM cycle requires the expectation of $Z|X$;
$\Theta^{(r)} = (\theta^{(r)}, \alpha^{(r)},
\boldsymbol{\xi}^{(r)})^\top$ as the current estimate (in the rth
iteration) of $\Theta$. The probability function of $Z$ given $X$,
say $g(z|x)$, is given by
\begin{equation*}
g(z|x) = \frac{z\, a_z\, \theta^{\theta-1}}{C'(\theta e^{-\alpha H(x_i;\, \boldsymbol{\xi})})}\,\mathrm{e}^{-\alpha(z-1) H(x_i;\, \boldsymbol{\xi})}
\end{equation*}
and its expected value is
\begin{equation*}
E(Z|X) = 1 + \theta e^{-\alpha H(x;\, \boldsymbol{\xi})}\,\frac{C''(\theta\, \mathrm{e}^{-\alpha H(x;\, \boldsymbol{\xi})})}{C'(\theta\, \mathrm{e}^{-\alpha H(x;\, \boldsymbol{\xi})})}.
\end{equation*}

The EM cycle is completed with the M-step by using the maximum likelihood estimation over $\Theta$, where the missing $Z's$ are replaced by their conditional expectations given before. The log-likelihood for the complete-data is
\begin{align*}
\textstyle
\ell_n^*(x_1, \ldots, x_n; \, z_1, \ldots, z_n; \,\alpha, \,\theta, \,\boldsymbol{\xi}) &\propto n\log\alpha + \log \theta \sum_{i=1}^n z_i + \sum_{i=1}^n \log h(x_i;\, \boldsymbol{\xi})\\
 &- \alpha \sum_{i=1}^n z_i H(x_i;\, \boldsymbol{\xi}) - n\log C(\theta).
\end{align*}

So, the components of the score function $U^*_n(\Theta) = \left(\partial l^*_n/\partial \theta, \partial l^*_n/\partial \alpha, \partial l^*_n/\partial \boldsymbol{\xi}\right)^\top$ are
\begin{align*}
\frac{\partial l^*_n}{\partial \theta} &= \frac{n}{\theta} -
\sum_{i=1}^{n} z_i - n\frac{C'(\theta)}{C(\theta)}, \quad  \quad
\frac{\partial l^*_n}{\partial \alpha} = \frac{n}{\alpha} -
\sum_{i=1}^{n} z_i H(x_i;\, \boldsymbol{\xi}) \quad \intertext{and}
\frac{\partial l^*_n}{\partial \boldsymbol{\xi}_k} &= \sum_{i=1}^{n}
\frac{\partial \log h(x_i;\, \boldsymbol{\xi})}{\partial \xi_k} - \alpha \sum_{i=1}^{n} z_i \frac{\partial H(x_i;\,\boldsymbol{\xi})}{\partial \boldsymbol{\xi}_k}.
\end{align*}

From a nonlinear system of equations $U^*_n(\widehat{\Theta}) = 0$, we obtain the iterative procedure of the EM algorithm
\begin{align*}
&\hat{\alpha}^{(t+1)} = \frac{n}{\sum_{i=1}^{n} z_i^{(t)} H(x_i;\,\boldsymbol{\xi}^{(t)})}, \quad \quad \hat{\theta}^{(t+1)} =
\frac{C(\hat{\theta}^{(t+1)})}{C'(\hat{\theta}^{(t+1)})}\frac{1}{n}\sum_{i=1}^{n}
z_i^{(t)} \intertext{and} &\sum_{i=1}^{n} \frac{\partial \log h(x_i;\,\hat{\boldsymbol{\xi}}^{(t+1)})}{\partial \boldsymbol{\xi}_k} -
\hat{\alpha}^{(t)}\sum_{i=1}^{n} z_i^{(t)}\frac{\partial H(x_i;\,\hat{\boldsymbol{\xi}}^{(t+1)})}{\partial \boldsymbol{\xi}_k} = 0,
\end{align*}
where $\hat{\theta}^{(t+1)}, \hat{\alpha}^{(t+1)}$ and
$\hat{\xi}^{(t+1)}$ are obtained numerically. Here, for $i=1,
\ldots, n$, we have
\begin{equation*}
z_i^{(t)} = 1 +\hat{\theta}^{(t)} \mathrm{e}^{-\hat{\alpha}^{(t)} H(x_i;\,\hat{\boldsymbol{\xi}}^{(t)})} \frac{C''(\hat{\theta}^{(t)}
\mathrm{e}^{-\hat{\alpha}^{(t)} H(x_i;\,\hat{\boldsymbol{\xi}}^{(t)})})}{C'(\hat{\theta}^{(t)}
\mathrm{e}^{-\hat{\alpha}^{(t)} H(x_i;\,\hat{\boldsymbol{\xi}}^{(t)})})}.
\end{equation*}

Note that, in each step, $\theta, \alpha$ and $\boldsymbol{\xi}$ are estimated independently. The EWPS distributions can be very useful in modeling lifetime data and practitioners may be interested in fitting one of our models.

\section{Maximum entropy identification}

Shannon (1948) introduced the probabilistic definition of entropy which is closely connected with the definition of entropy in statistical mechanics. Let $X$ be a random variable of a continuous distribution with density $f$. Then, the Shannon entropy of $X$ is defined by
\begin{equation}\label{shannon}
\mathbb{H}_{Sh}(f) = - \int_{\mathbb{R}}f(x;\theta,\alpha,
\boldsymbol{\xi})\log\left[f(x;\theta,\alpha, \boldsymbol{\xi})\right] dx.
\end{equation}

Jaynes (1957) introduced one of the most powerful techniques employed in the field of probability and statistics called the maximum entropy method. This method is closely related to the Shannon entropy and considers a class of density functions
\begin{equation}\label{jaynes}
\mathbb{F} = \left\{f(x;\theta,\alpha, \boldsymbol{\xi}):
\operatorname{E}_{f}(T_i(X)) = \alpha_i,\, i = 0, \ldots, m\right\},
\end{equation}
where $T_i (X), i = 1, \ldots, m$, are absolutely integrable functions with respect to $f$, and $T_0(X) = a_0 = 1$. In the continuous case, the maximum entropy principle suggests deriving the unknown density function of the random variable $X$ by the model that maximizes the Shannon entropy in~\eqref{shannon}, subject to the information constraints defined in the class $\mathbb{F}$. Shore and Johnson (1980) treated axiomatically the maximum entropy method. This method has been successfully applied in a wide variety of fields and has also been used for the characterization of several standard probability distributions; see, for example, Kapur (1989), Soofi (2000) and Zografos and Balakrishnan (2009).

The maximum entropy distribution is the density of the class F, denoted by $f^{ME}$, which is obtained as the solution of the
optimization problem
\begin{equation*}
f^{ME}(x;\theta,\alpha, \boldsymbol{\xi}) = \arg \max_{f \in \mathbb{F}} \mathbb{H}_{Sh}.
\end{equation*}

Jaynes (1957, p. 623) states that the maximum entropy distribution $f^{ME}$, obtained by the constrained maximization problem described above, ``is the only unbiased assignment we can make; to use any other would amount to arbitrary assumption of information which by hypothesis we do not have". It is the distribution which should not incorporate
additional exterior information other than which is specified by the constraints.

We now derive suitable constraints in order to provide a maximum entropy characterization for our class of distributions defined by~\eqref{cdf}. For this purpose, the next result plays an important role.

\begin{proposition}\label{constraints}
Let X be a random variable with pdf given by~\eqref{pdf}. Then,
\begin{description}
\item[C1.] $\operatorname{E}\left[\log( C'(\theta \, \mathrm{e}^{-\alpha H(X;\, \boldsymbol{\xi})}))\right] = \dfrac{\theta}{C(\theta)}\operatorname{E}\left[C'(\theta \, \mathrm{e}^{-\alpha H(Y;\, \boldsymbol{\xi})})\log( C'(\theta \, \mathrm{e}^{-\alpha H(Y;\, \boldsymbol{\xi})}))\right];$

\item[C2.] $\operatorname{E}\left[\log( h(X; \,\boldsymbol{\xi}))\right] = \dfrac{\theta}{C(\theta)}\operatorname{E}\left[C'(\theta \, \mathrm{e}^{-\alpha H(Y;\, \boldsymbol{\xi})})\log( h(Y;\,\boldsymbol{\xi}) )\right];$

\item[C3.] $\operatorname{E}\left[H(X; \,\boldsymbol{\xi})\right] = \dfrac{\theta}{C(\theta)}\operatorname{E}\left[C'(\theta \, \mathrm{e}^{-\alpha H(Y;\, \boldsymbol{\xi})})H(Y;\,\boldsymbol{\xi} )\right],$
\end{description}
where Y follows the EW distribution with density function~\eqref{pdfextweibull}.
\end{proposition}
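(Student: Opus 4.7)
The plan is to reduce all three identities to a single change-of-measure observation. Comparing \eqref{pdf} with \eqref{pdfextweibull}, the EWPS density factorises as
\begin{equation*}
f(x;\theta,\alpha,\boldsymbol{\xi}) \;=\; \frac{\theta}{C(\theta)}\,g(x;\alpha,\boldsymbol{\xi})\,C'\!\bigl(\theta\,\mathrm{e}^{-\alpha H(x;\,\boldsymbol{\xi})}\bigr),
\end{equation*}
where $g(\cdot;\alpha,\boldsymbol{\xi})$ is the EW density in \eqref{pdfextweibull}. Consequently, for any measurable $\phi$ for which the integrals exist,
\begin{equation*}
\operatorname{E}_{f}[\phi(X)] \;=\; \frac{\theta}{C(\theta)}\,\operatorname{E}_{g}\!\Bigl[\phi(Y)\,C'\!\bigl(\theta\,\mathrm{e}^{-\alpha H(Y;\,\boldsymbol{\xi})}\bigr)\Bigr],
\end{equation*}
with $Y\sim\mathrm{EW}(\alpha,\boldsymbol{\xi})$.

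Having established this master identity, I would then derive C1, C2, and C3 by specialising $\phi$. Taking $\phi(x)=\log C'\!\bigl(\theta\,\mathrm{e}^{-\alpha H(x;\,\boldsymbol{\xi})}\bigr)$ yields C1; taking $\phi(x)=\log h(x;\boldsymbol{\xi})$ yields C2; and taking $\phi(x)=H(x;\boldsymbol{\xi})$ yields C3. Each substitution is immediate and leaves the right-hand side in precisely the form stated in the proposition.

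There is no real obstacle here beyond ensuring the integrals are finite, which follows from $C$ being finite at $\theta$ and from the assumption, implicit in what follows in the paper, that the relevant expectations appear as constraints in the maximum-entropy formulation; so integrability can be taken as a standing hypothesis. The only mild point to mention is that one should verify $f$ genuinely is a density (this is already known from Section~3, since $C'(\theta) = \sum_{n\ge 1} n a_n \theta^{n-1}$ and $\int_0^\infty g(x;n\alpha,\boldsymbol{\xi})\,dx=1$ give $\int_0^\infty f\,dx=1$), so the change-of-measure step is legitimate.
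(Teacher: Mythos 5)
Your proof is correct: the factorisation $f(x;\theta,\alpha,\boldsymbol{\xi})=\frac{\theta}{C(\theta)}\,g(x;\alpha,\boldsymbol{\xi})\,C'(\theta\,\mathrm{e}^{-\alpha H(x;\,\boldsymbol{\xi})})$ follows immediately from comparing \eqref{pdf} with \eqref{pdfextweibull}, and the three constraints are exactly the resulting change-of-measure identity with the three stated choices of $\phi$. The paper itself omits the demonstration as ``easily obtained,'' and your master identity is precisely the intended one-line argument, so there is nothing to compare beyond noting that you have supplied the detail the authors left out.
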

\begin{proof}
The constraints C1,  C2 and C3 are easily obtained and therefore their demonstrations are omitted.
\end{proof}

The next proposition reveals that the EWPS distribution has maximum entropy in the class of all probability distributions specified by the constraints stated in the previous proposition.
\begin{proposition}
The pdf f of a random variable X, given by~\eqref{pdf}, is the unique solution of the optimization problem
\begin{equation*}
f(x;\theta,\alpha, \boldsymbol{\xi}) = \arg \max_{h}
\mathbb{H}_{Sh},
\end{equation*}
under the constraints $\rm{C1}$, $\rm{C2}$ and $\rm{C3}$ presented in the Proposition~\ref{constraints}.
\end{proposition}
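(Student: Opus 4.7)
The plan is to use the standard Gibbs-type argument built on the non-negativity of the Kullback--Leibler divergence, which is the cleanest route whenever the candidate maximum-entropy density can be written as an exponential in the constraint functions (broadly construed). Here the EWPS density in \eqref{pdf} is precisely of that form: taking the logarithm gives
\begin{equation*}
\log f(x;\theta,\alpha,\boldsymbol{\xi}) = \log(\theta\alpha) - \log C(\theta) + \log h(x;\boldsymbol{\xi}) - \alpha H(x;\boldsymbol{\xi}) + \log C'\bigl(\theta\,\mathrm{e}^{-\alpha H(x;\boldsymbol{\xi})}\bigr),
\end{equation*}
and the three non-constant summands on the right are exactly the functions whose expectations are pinned down by the constraints \textbf{C1}, \textbf{C2}, \textbf{C3}.

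First, I would let $h$ be an arbitrary density on $(0,\infty)$ lying in the constraint class $\mathbb{F}$ defined by \textbf{C1}--\textbf{C3}, and invoke the Kullback--Leibler inequality
\begin{equation*}
D(h\,\|\,f) = \int_{0}^{\infty} h(x)\log\frac{h(x)}{f(x;\theta,\alpha,\boldsymbol{\xi})}\,dx \;\geq\; 0,
\end{equation*}
with equality if and only if $h = f$ almost everywhere. Rearranging yields $\mathbb{H}_{Sh}(h) \le -\operatorname{E}_{h}[\log f(X;\theta,\alpha,\boldsymbol{\xi})]$. Substituting the explicit expression for $\log f$ displayed above, the right-hand side becomes a fixed affine combination of the three constrained expectations plus the constant $\log C(\theta) - \log(\theta\alpha)$.

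Next, I would observe that the EWPS density $f$ itself belongs to $\mathbb{F}$: this is exactly the content of Proposition~\ref{constraints}, which shows that $\operatorname{E}_{f}[\log h(X;\boldsymbol{\xi})]$, $\operatorname{E}_{f}[H(X;\boldsymbol{\xi})]$ and $\operatorname{E}_{f}[\log C'(\theta\mathrm{e}^{-\alpha H(X;\boldsymbol{\xi})})]$ agree with the right-hand sides prescribed by \textbf{C1}--\textbf{C3}. Consequently, for any $h\in\mathbb{F}$ we have
\begin{equation*}
-\operatorname{E}_{h}[\log f(X)] = -\operatorname{E}_{f}[\log f(X)] = \mathbb{H}_{Sh}(f),
\end{equation*}
because the three relevant expectations coincide and all the other terms in $\log f$ are deterministic constants. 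Combining with the KL bound gives $\mathbb{H}_{Sh}(h) \le \mathbb{H}_{Sh}(f)$ for every $h\in\mathbb{F}$, which is the required maximization. Uniqueness follows from the strict case of Gibbs' inequality: equality in $D(h\|f)\ge0$ forces $h=f$ almost everywhere, so $f$ is the unique maximizer in $\mathbb{F}$.

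The only step requiring any care — and the closest thing to an obstacle here — is verifying that $-\operatorname{E}_{h}[\log f]$ is indeed constant over $\mathbb{F}$, i.e.\ that the three constraints \textbf{C1}--\textbf{C3} together control every non-deterministic term appearing in $\log f$. This is transparent from the display for $\log f$, but it is the structural reason why exactly these three constraints (and not fewer, and not different ones) characterize the EWPS class. Beyond that, the argument is a direct application of the KL-divergence technique, with no delicate integrability issues since all expectations involved are finite for $h,f\in\mathbb{F}$.
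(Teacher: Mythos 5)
Your proof is correct and follows essentially the same route as the paper: both arguments rest on the non-negativity of the Kullback--Leibler divergence $D(h\,\|\,f)\ge 0$, the observation that the constraints \textbf{C1}--\textbf{C3} fix the expectation of every non-constant term in $\log f$, so that $-\operatorname{E}_{h}[\log f]=\mathbb{H}_{Sh}(f)$ for all $h$ in the constraint class, and the equality case of Gibbs' inequality for uniqueness. Your write-up is, if anything, slightly more explicit than the paper's about \emph{why} these three constraints suffice, but the underlying argument is identical.
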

\begin{proof}
Let $\tau$ be a pdf which satisfies the constraints C1, C2 and C3. The Kullback-Leibler divergence between $\tau$ and $f$ is
\begin{equation*}
D(\tau, f) = \int_{\mathbb{R}} \tau(x;\theta,\alpha, \boldsymbol{\xi}) \log\left(\frac{\tau(x;\theta,\alpha, \boldsymbol{\xi})}{f(x;\theta,\alpha, \boldsymbol{\xi})}\right) dx.
\end{equation*}

Following Cover and Thomas (1991), we obtain
\begin{eqnarray*}
0 \leq D(\tau, f) &=& \int_{\mathbb{R}} \tau(x;\theta,\alpha, \boldsymbol{\xi}) \log\left[ \tau(x;\theta,\alpha, \boldsymbol{\xi})\right] dx - \int_{\mathbb{R}} \tau(x;\theta,\alpha, \boldsymbol{\xi}) \log\left[f(x;\theta,\alpha, \boldsymbol{\xi})\right] dx\\
&=& -\mathbb{H}_{Sh}(\tau;\theta,\alpha, \boldsymbol{\xi}) - \int_{\mathbb{R}} \tau(x;\theta,\alpha, \boldsymbol{\xi}) \log\left[
f(x;\theta,\alpha, \boldsymbol{\xi})\right]dx.
\end{eqnarray*}
From the definition of $f$ and based on the constraints C1, C2 and C3, it follows that
\begin{eqnarray*}
\hspace{-2cm}\int_{\mathbb{R}} \tau(x) \log\left[f(x)\right] dx &=& \log(\theta \alpha) + \frac{\theta}{C(\theta)}\operatorname{E}\left\{C'(\theta \, \mathrm{e}^{-\alpha H(Y;\, \boldsymbol{\xi})})\log\left[ h(Y;\,\boldsymbol{\xi}) \right]\right\} - \log\left[ C(\theta)\right]\\
&-&\alpha \frac{\theta}{C(\theta)}\operatorname{E}\left[C'(\theta \, \mathrm{e}^{-\alpha H(Y;\, \boldsymbol{\xi})})H(Y;\,\boldsymbol{\xi} )\right] \\
&+& \frac{\theta}{C(\theta)}\operatorname{E}\left\{\log\left[ C'(\theta \, \mathrm{e}^{-\alpha H(Y;\, \boldsymbol{\xi})})\right] C'(\theta \, \mathrm{e}^{-\alpha H(Y;\, \boldsymbol{\xi})})\right\}\\
&=&  \int_{\mathbb{R}} f(x;\theta,\alpha, \boldsymbol{\xi}) \log\left[
f(x;\theta,\alpha, \boldsymbol{\xi})\right] dx  = - \mathbb{H}_{Sh}(f),
\end{eqnarray*}
where $Y$ is defined as before. So, we have
$\mathbb{H}_{Sh}(\tau) \leq \mathbb{H}_{Sh}(f)$ with equality if and
only if $\tau(x;\theta,\alpha, \boldsymbol{\xi}) = f
(x;\theta,\alpha, \boldsymbol{\xi})$ for all $x$, except for a set of measure 0, thus proving the uniqueness.
\end{proof}

The intermediate steps in the above proof in fact provide the following explicit expression for the Shannon entropy of the EWPS distribution
\begin{eqnarray}\nonumber
&&\mathbb{H}_{Sh}(f) = -\log(\theta \alpha) - \frac{\theta}{C(\theta)}\operatorname{E}\left\{C'(\theta \, \mathrm{e}^{-\alpha H(Y;\, \boldsymbol{\xi})})\log\left[h(Y;\,\boldsymbol{\xi}) \right]\right\}+ \log\left[C(\theta)\right]\\
&&+\alpha \frac{\theta}{C(\theta)}\operatorname{E}\left[C'(\theta \, \mathrm{e}^{-\alpha H(Y;\, \boldsymbol{\xi})})H(Y;\,\boldsymbol{\xi} )\right] - \frac{\theta}{C(\theta)}\operatorname{E}\left\{C'(\theta \, \mathrm{e}^{-\alpha H(Y;\, \boldsymbol{\xi})})\log\left[C'(\theta \, \mathrm{e}^{-\alpha H(Y;\, \boldsymbol{\xi})})\right]
\right\}.
\end{eqnarray}

For some EWPS distributions, the above results can only be obtained numerically.
\section{Special models}

In this section, we investigate some special cases of the EWPS class of distributions. We offer some expressions for moments and moments of the order statistics. To illustrate the flexibility of these distributions, we provide plots of the density and hazard rate functions for selected parameter values.

\subsection{Modified Weibull geometric distribution}
\vskip3mm
The modified Weibull geometric (MWG) distribution is defined by the cdf (\ref{cdf}) with $H(x;\, \boldsymbol{\xi}) = x^\gamma$ and $C(\theta) = \theta(1-\theta)^{-1}$ leading to
\begin{equation}
F(x;\theta,\alpha, \gamma,\lambda) = 1 -
\frac{(1-\theta)\exp\left(-\alpha x^\gamma \mathrm{e}^{\lambda
x}\right)}{1 - \theta \exp\left(-\alpha x^\gamma \mathrm{e}^{\lambda
x}\right)}, \quad x > 0,
\end{equation}
where $\theta \in (0,1)$. The associated pdf and hazard rate function are
\begin{align*}
f(x;\theta,\alpha, \gamma,\lambda) &= \alpha (1-\theta) (\gamma +
\lambda x)\,x^{\gamma-1}\frac{\exp\left(\lambda x - \alpha x^\gamma
\mathrm{e}^{\lambda x}\right)}{\left[1-\theta \exp\left(-\alpha
x^\gamma \mathrm{e}^{\lambda x}\right)\right]^2} \intertext{and}
\tau(x;\theta,\alpha, \gamma,\lambda) &= \alpha(\gamma+\lambda
x)\,x^{\gamma-1}\frac{\exp\left(\lambda x\right)}{1 - \theta
\exp\left(-\alpha x^\gamma \mathrm{e}^{\lambda x}\right)}
\end{align*}
for $x > 0$, respectively. The MWG distribution contains the WG distribution (Barreto-Souza \emph{et al}. (2010)) as the particular choice $\lambda = 0$. Further, for $\lambda = 0$ and $\alpha=1$, we obtain the EG distribution (Adamidis and Loukas (1998)). Figures $\ref{fig:densityfigewps}$ and $\ref{fig:hazardfigewps}$ display the density and hazard functions of the MWG distribution for selected parameter values.

\begin{figure}[!htbp]
\centering
\begin{tabular}{lll}
\hspace{-0.7cm}\subfigure[\scriptsize{$\alpha= 2, \,\gamma = 1.5\,\,\mbox{and}\,\, \lambda=0.5$} \label{ds1}]{\includegraphics[width=0.37\textwidth]{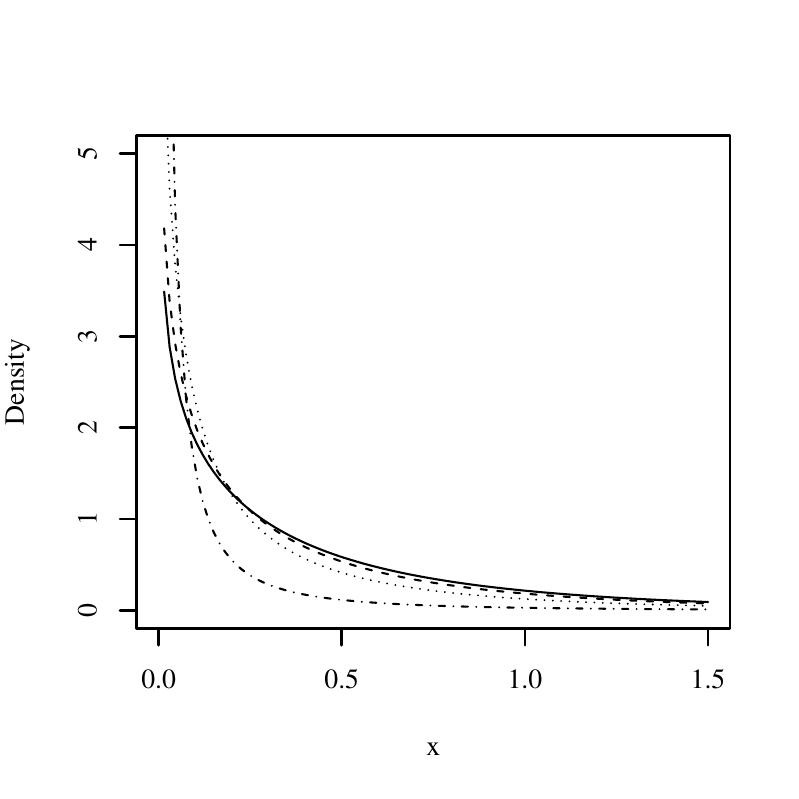}}&\hspace{-0.6cm}\subfigure[\scriptsize{$\alpha= 2, \, \gamma = 0.8\,\,\mbox{and}\,\, \lambda=0.01$} \label{ds2}]{\includegraphics[width=0.37\textwidth]{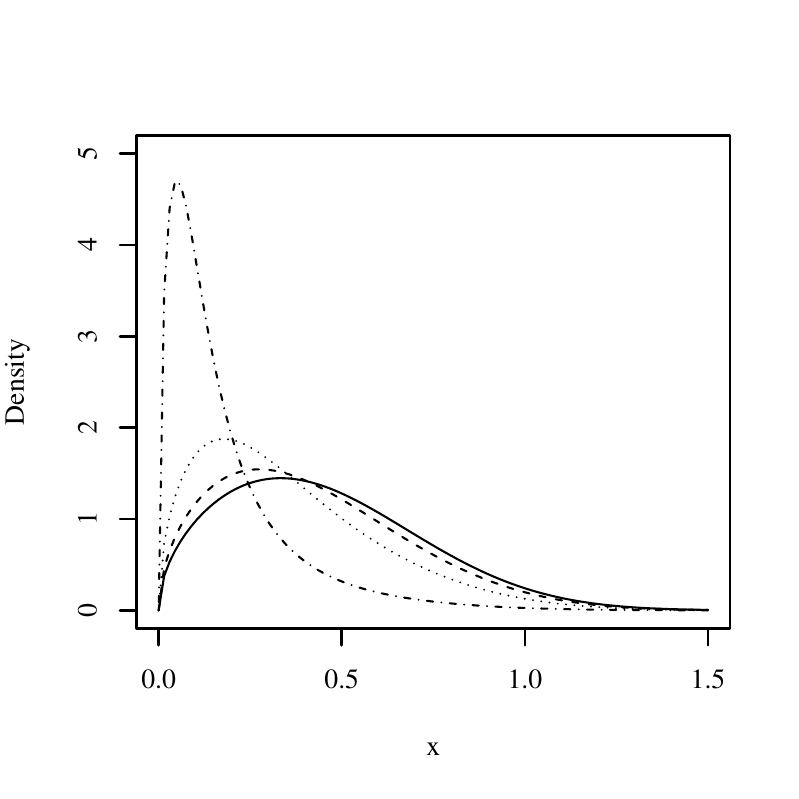}}&\hspace{-0.6cm}
\subfigure[\scriptsize{$\alpha= 0.1, \, \gamma = 6\,\,\mbox{and}\,\, \lambda=0.5$} \label{ds3}]{\includegraphics[width=0.37\textwidth]{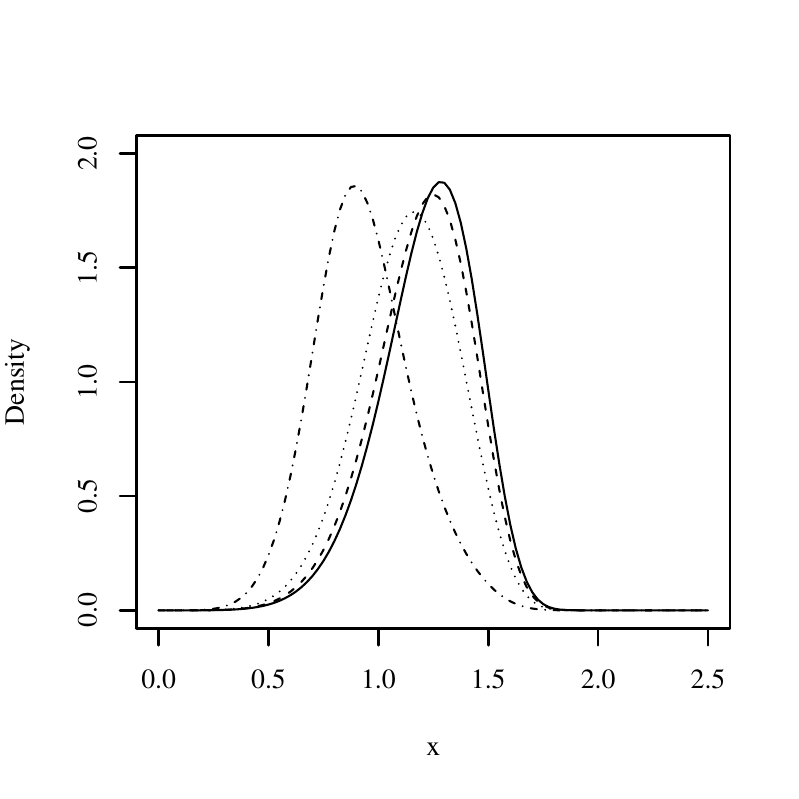}}\\
\end{tabular}
\vspace{-0.25cm}
\caption{Plots of the MWG density functions for $\theta = 0.01$ (solid line), $\theta = 0.2$ (dashed
line), $\theta = 0.5$ (dotted line) and $\theta = 0.9$ (dotdash line).}
\label{fig:densityfigewps}
\end{figure}

The $r$th raw moment of the random variable $X$ having the MWG distribution has closed-form. It is calculated from (\ref{linearcomb}) as
\begin{equation}\label{momentmwg}
E(X^r) = \sum_{n=1}^{\infty} p_n\, \mu_r(n),
\end{equation}
where $\mu_r(n) = \int_{0}^{\infty}x^r g(x;\, n\alpha, \gamma, \lambda)dx$ denotes the $r$th raw moment of the MW distribution with pa\-ra\-me\-ters $n\alpha, \gamma$ and $\lambda$. Here $p_n$ corresponds to the probability function of the geometric distribution. Carrasco {\it et al.} (2008) determined an infinite representation for the $r$th raw moment of the MW distribution with these parameters expressed as
\begin{equation}\label{carrascoetal}
\mu_r(n) = \sum_{i_1, \ldots, i_r=1}^{\infty} \frac{A_{i_1, \ldots, i_r}\,\Gamma(s_r/\gamma + 1)}{(n\alpha)^{s_r/\gamma}},
\end{equation}
where
\begin{equation*}
A_{i_1, \ldots, i_r} = a_{i_1}, \ldots, a_{i_r} \,\,\,\,\, \mbox{and} \,\,\,\,\, s_r = i_1, \ldots, i_r,
\end{equation*}
and
\begin{equation*}
a_i = \frac{(-1)^{i+1}i^{i-2}}{(i-1)!}\left(\frac{\lambda}{\gamma}\right)^{i-1}.
\end{equation*}
Hence, the moments of the MWG distribution can be obtained directly from equations (\ref{momentmwg}) and (\ref{carrascoetal}).
\begin{figure}[!htbp]
\centering
\begin{tabular}{lll}
\hspace{-0.7cm}\subfigure[\scriptsize{$\alpha= 2, \,\gamma = 1.5\,\,\mbox{and}\,\, \lambda=0.5$} \label{hz1}]{\includegraphics[width=0.37\textwidth]{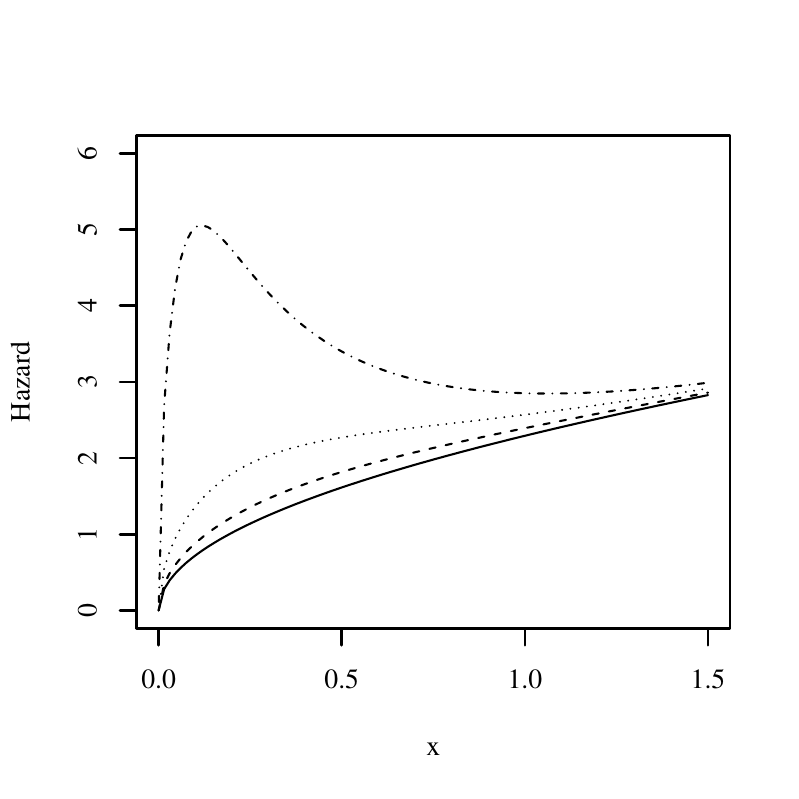}}&\hspace{-0.6cm}\subfigure[\scriptsize{$\alpha= 2, \, \gamma = 0.8\,\,\mbox{and}\,\, \lambda=0.01$} \label{hz2}]{\includegraphics[width=0.37\textwidth]{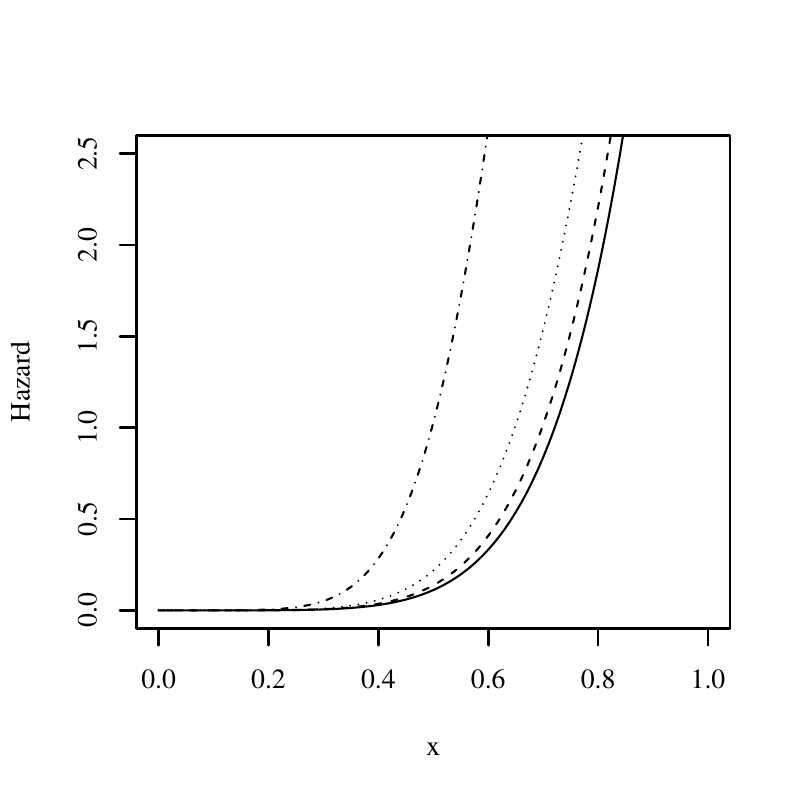}}&\hspace{-0.6cm}
\subfigure[\scriptsize{$\alpha= 0.1, \, \gamma = 6\,\,\mbox{and}\,\, \lambda=0.5$} \label{hz3}]{\includegraphics[width=0.37\textwidth]{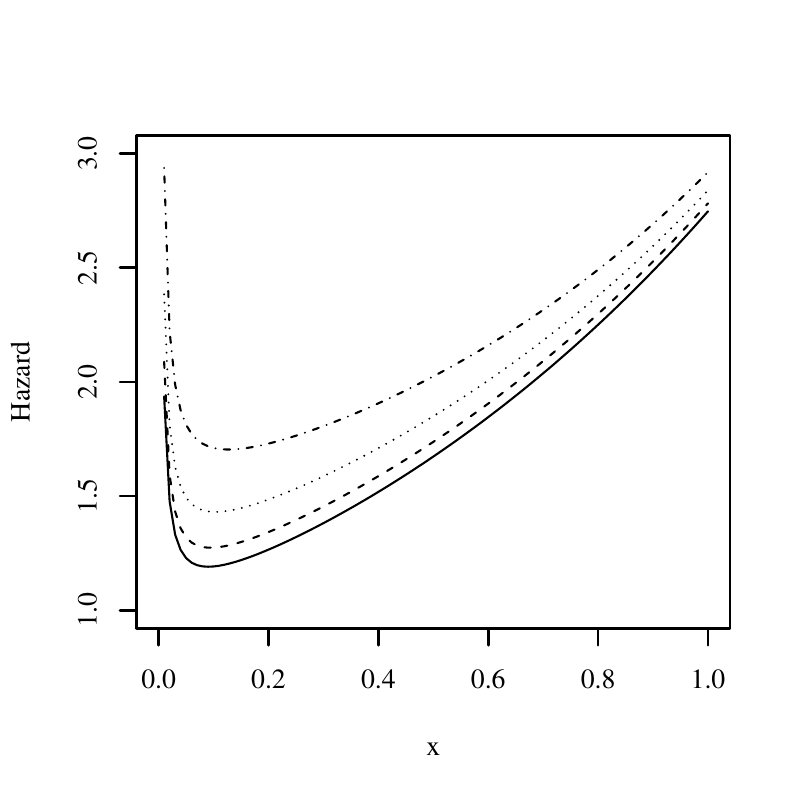}}\\
\end{tabular}
\vspace{-0.25cm}
\caption{Plots of the MWG hazard rate functions for $\theta = 0.01$ (solid line), $\theta = 0.2$ (dashed
line), $\theta = 0.5$ (dotted line) and $\theta = 0.9$ (dotdash line).}
\label{fig:hazardfigewps}
\end{figure}

The density of the $i$th order statistic $X_{i:m}$ in a random sample of size $m$ from the MWG distribution is given by (for $i = 1, \ldots, m$)
\begin{equation*}
f_{i:m}(x) = \frac{m!}{(i-1)! (m-i)!} \sum_{n=1}^\infty
\sum_{j=0}^{i-1} \omega_j\, p_n\,
\left[\frac{(1-\theta)\exp\left(-\alpha x^\gamma \mathrm{e}^{\lambda
x}\right)}{1 - \theta \exp\left(-\alpha x^\gamma \mathrm{e}^{\lambda
x}\right)}\right]^{m+j-i} g(x; n\alpha, \gamma, \lambda),
\end{equation*}
where $g(x; n\alpha, \gamma, \lambda)$ denotes the MW density function with parameters $n\alpha, \gamma$ and $\lambda$. From~\eqref{eosewps}, we obtain 
\begin{equation*}
\operatorname{E}\left(X_{i:m}^s\right) = \frac{m!}{(i-1)! (m-i)!}\sum_{n=1}^\infty \sum_{j=0}^{i-1} \omega_j\, p_n\,
\operatorname{E}\left\{X^s \left[\frac{(1-\theta)\exp\left(-\alpha
X^\gamma \mathrm{e}^{\lambda X}\right)}{1 - \theta \exp\left(-\alpha
X^\gamma \mathrm{e}^{\lambda X}\right)}\right]^{m+j-i}\right\}.
\end{equation*}

\subsection{Pareto Poisson distribution}
\vskip3mm
The Pareto Poisson (PP) distribution is defined by taking $H(x;\, \boldsymbol{\xi}) = \log(x/k)$ and $C(\theta) = \mathrm{e}^{\theta}-1$ in~\eqref{cdf}, which yields
\begin{equation*}
F(x; \theta,\alpha, k) = 1 - \frac{\exp\left[\theta \left(k/x\right)^{\alpha}\right]-1}{\mathrm{e}^{\theta}-1}, \quad x\geq k.
\end{equation*}

The pdf and hazard functions of the PP distribution are
\begin{equation*}
f(x; \theta,\alpha, k) = \frac{\theta\,\alpha\, k^{\alpha} \exp\left[\theta \left(k/x\right)^{\alpha}\right]}{(\mathrm{e}^{\theta}-1)\,x^{\alpha+1}}
\end{equation*}
and
\begin{equation*}
\tau(x; \theta,\alpha, k) = \frac{\theta\,\alpha \,k^{\alpha}\exp\left[\theta\left(k/x\right)^{\alpha}\right]}{x^{\alpha+1}\left\{\exp\left[\theta\left(k/x\right)^{\alpha}\right]-1\right\}}.
\end{equation*}

We obtain the Pareto distribution as a sub-model when $\theta \rightarrow 0$. The $r$th moment of the random variable $X$ following the PP distribution becomes
\begin{equation}\label{paretomoments}
E(X^r) = \frac{\alpha k^r}{(\mathrm{e}^{\theta}-1)}\sum_{n=1}^{\infty}\frac{\theta^n}{(n-1)!\,(n\alpha-r)},
\quad n\alpha > r.
\end{equation}

In particular, setting $r = 1$ in~\eqref{paretomoments}, the mean of $X$ reduces to
$$\mu =  \frac{\alpha k}{\mathrm{e}^{\theta}-1}\sum_{n=1}^{\infty}\frac{\theta^n}{(n-1)!\,(n\alpha-1)}, \quad n\alpha > 1.$$

\begin{figure}[!htbp]
\centering
\begin{tabular}{lll}
\hspace{-0.7cm}\subfigure[\scriptsize{$\alpha= 2, \,\gamma = 1.5\,\,\mbox{and}\,\, \lambda=0.5$} \label{ppd1}]{\includegraphics[width=0.37\textwidth]{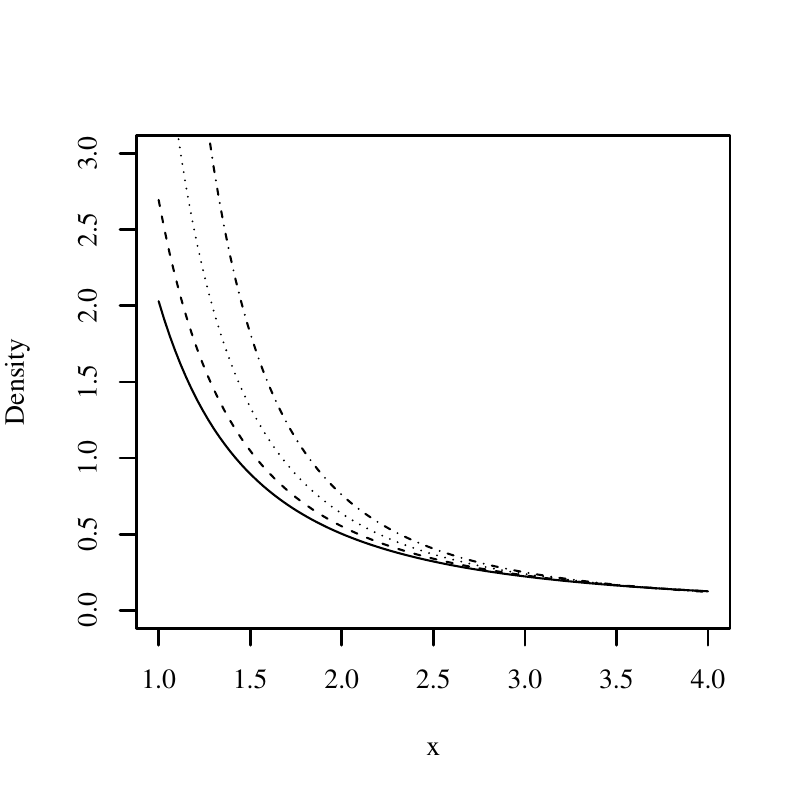}}&\hspace{-0.6cm}\subfigure[\scriptsize{$\alpha= 2, \, \gamma = 0.8\,\,\mbox{and}\,\, \lambda=0.01$} \label{ppd2}]{\includegraphics[width=0.37\textwidth]{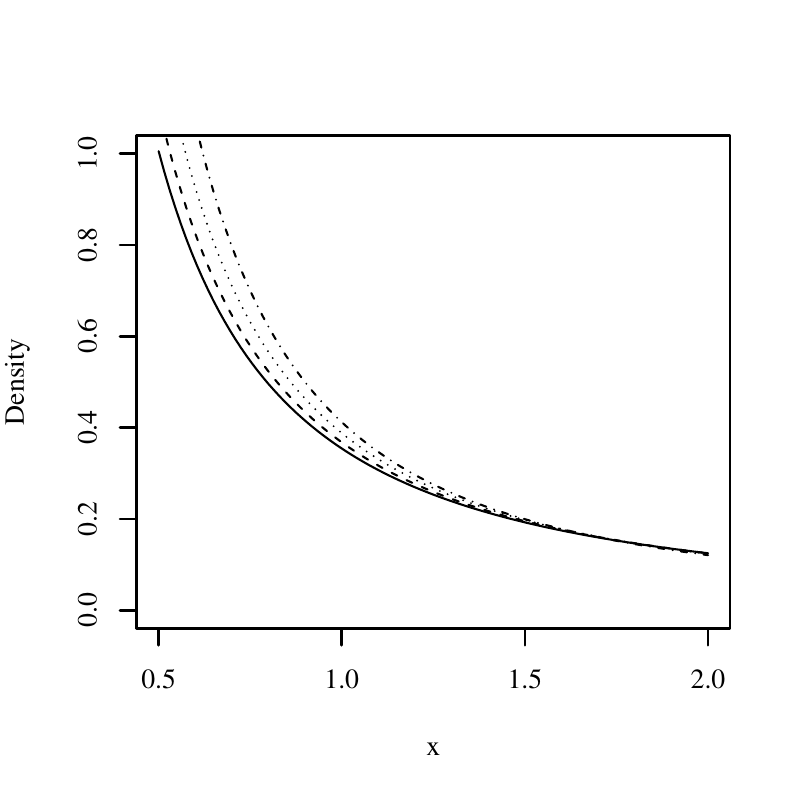}}&\hspace{-0.6cm}
\subfigure[\scriptsize{$\alpha= 0.1, \, \gamma = 6\,\,\mbox{and}\,\, \lambda=0.5$} \label{ppd3}]{\includegraphics[width=0.37\textwidth]{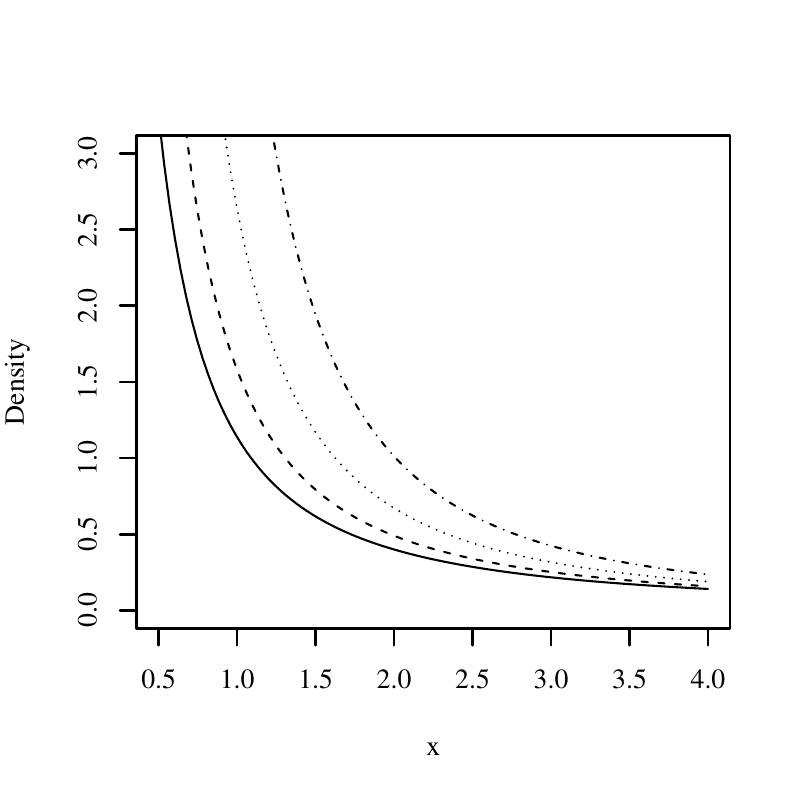}}\\
\end{tabular}
\vspace{-0.25cm}
\caption{Plots of the PP density functions for $\theta = 0.01$ (solid line), $\theta = 0.2$ (dashed line), $\theta = 0.5$ (dotted line) and $\theta = 0.9$ (dotdash line).}
\label{fig:ppdensities}
\end{figure}

\begin{figure}[!htbp]
\centering
\begin{tabular}{lll}
\hspace{-0.7cm}\subfigure[\scriptsize{$\alpha= k = 0.5$} \label{pph1}]{\includegraphics[width=.37\linewidth]{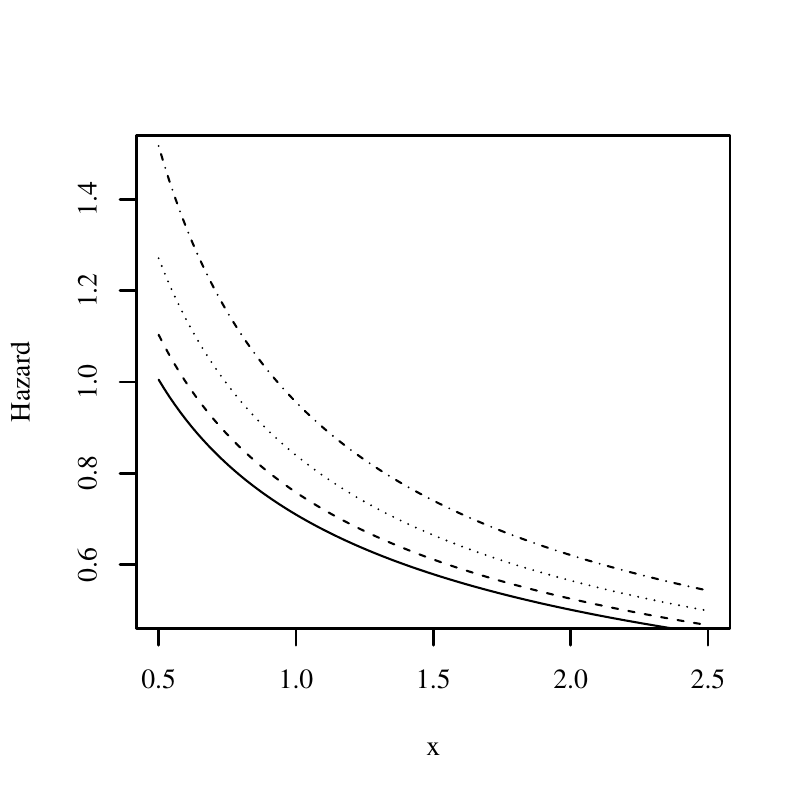}}&\hspace{-0.7cm}
\subfigure[\scriptsize{$\alpha= 2 \,\,\mbox{and}\,\, k = 1$} \label{pph2}]{\includegraphics[width=.37\linewidth]{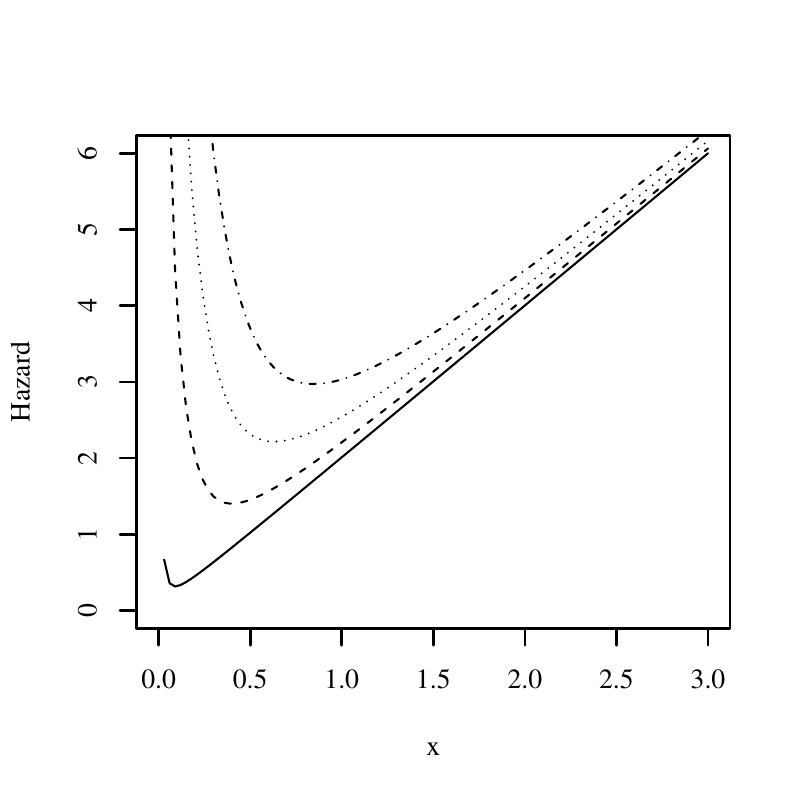}}&\hspace{-0.7cm}
\subfigure[\scriptsize{$\alpha= 7 \,\,\mbox{and}\,\, k=2$} \label{pph3}]{\includegraphics[width=.37\linewidth]{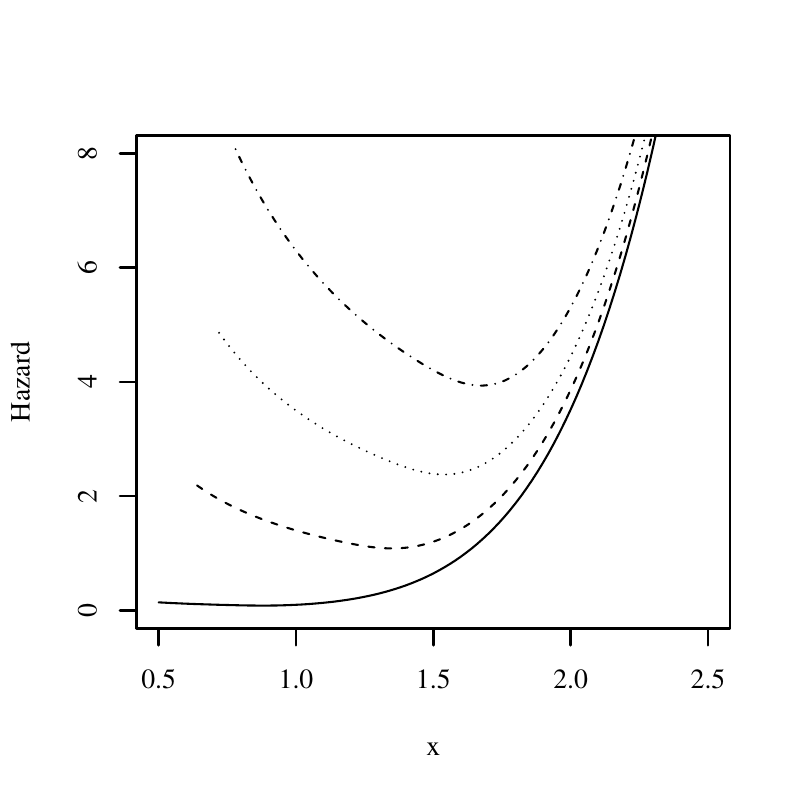}}\\
\end{tabular}
\vspace{-0.25cm}
\caption{Plots of the PP hazard functions for $\theta = 0.01$ (solid line), $\theta = 0.2$ (dashed line), $\theta = 0.5$ (dotted line) and $\theta = 0.9$ (dotdash line).}
\label{fig:pphazards}
\end{figure}

From equation (\ref{eosewps}), the $s$th moment of the $i$th order statistic, for $i = 1, \ldots, m,$ is given by
\begin{equation*}
\operatorname{E}\left(X_{i:m}^s\right) = \frac{m!}{(i-1)! (m-i)!}
\sum_{n=1}^\infty \sum_{j=0}^{i-1} \omega_j\, p_n\,
\operatorname{E}\left[X^s \left(\frac{\exp(\theta \left(k/X\right)^{\alpha})-1}{\mathrm{e}^{\theta}-1}\right)^{m+j-i}\right],
\end{equation*}
where $p_n$ denotes the Poisson probability function. Furthermore, after some algebra, the Shannon entropy for the PP distribution reduces to
\begin{equation*}
\mathbb{H}_{Sh}(f) = \log\left(\frac{\mathrm{e}^\theta-1}{\theta \alpha}\right) - \frac{\theta}{\mathrm{e}^\theta-1}\left(\mu_1 - \alpha \mu_2 + \mu_3\right),
\end{equation*}
where
\begin{align*}
\mu_1 &= \operatorname{E}\left[\exp\left\{\theta \left(\frac{k}{X}\right)^\alpha\right\}\log \left(\frac{1}{X}\right)\right] = \frac{1}{2(\mathrm{e}^\theta-1)}\left\{\frac{\mathrm{Chi}(2\theta)-\log(2\theta)+\mathrm{Shi}(2\theta)-\gamma}{\alpha} - (\mathrm{e}^{2\theta}-1)\log k\right\},\\
\mu_2 &= \operatorname{E}\left[\exp\left\{\theta \left(\frac{k}{X}\right)^\alpha\right\}\log \left(\frac{X}{k}\right)\right] = \frac{\mathrm{Chi}(2\theta)-\log(2\theta)+\mathrm{Shi}(2\theta)-\gamma}{2\alpha(\mathrm{e}^\theta-1)}
\intertext{and}
\mu_3 &= \operatorname{E}\left[\theta \exp\left\{\theta \left(\frac{k}{X}\right)^\alpha\right\}\left(\frac{k}{X}\right)^\alpha\right] = \frac{\alpha\,\theta\, k^{2\alpha}}{4(\mathrm{e}^{\theta}-1)}\left\{1- (2\theta+1)\mathrm{e}^{2\theta}\right\},
\end{align*}
where $$\operatorname{Chi}(z) = \gamma + \log z + \int_{0}^{z}\frac{\operatorname{cosh}(t)-1}{t}dt$$ is the hyperbolic cosine integral, $$\operatorname{Shi}(z) = \int_{0}^{z}\frac{\operatorname{sinh}(t)-1}{t}dt$$ is the hyperbolic sine integral and $\gamma \approx 0.577216$ is the Euler-Mascheroni constant.
\subsection{Chen logarithmic distribution}
\vskip3mm
The Chen logarithmic (CL) distribution is defined by the cdf (\ref{cdf}) with $H(x;\, \boldsymbol{\xi}) = \exp(x^\beta)-1$ and $C(\theta) = -\log(1-\theta)$, leading to
\begin{equation*}
F(x) = 1 - \frac{\log\left\{1-\theta \exp\left[-\alpha (\exp(x^\beta)-1)\right]\right\}}{\log(1-\theta)}   , \quad x > 0,
\end{equation*}
where $\theta \in (0,1)$. The associated pdf and hazard rate function (for $x>0$) are
\begin{equation*}
f(x) = \frac{\theta \alpha b x^{b-1} \exp\left\{x^b - \alpha\left[\exp(x^b)-1\right]\right\}}{\log(1-\theta)\left\{\theta \exp\left[- \alpha(\exp(x^b)-1)\right]-1\right\}}
\end{equation*}
and
\begin{equation*}
\tau(x) = \frac{\theta \alpha b x^{b-1}\exp\left[x^b - \alpha (\exp(x^b)-1)\right]}{\left\{\theta \exp\left[- \alpha (\exp(x^b)-1)\right]-1\right\}\log\left\{1-\theta \exp\left[- \alpha (\exp(x^b)-1)\right]\right\}},
\end{equation*}
respectively.

\begin{figure}[!htbp]
\centering
\begin{tabular}{lll}
\hspace{-0.7cm}\subfigure[\scriptsize{$\alpha= b = 1$} \label{cld1}]{\includegraphics[width=.37\linewidth]{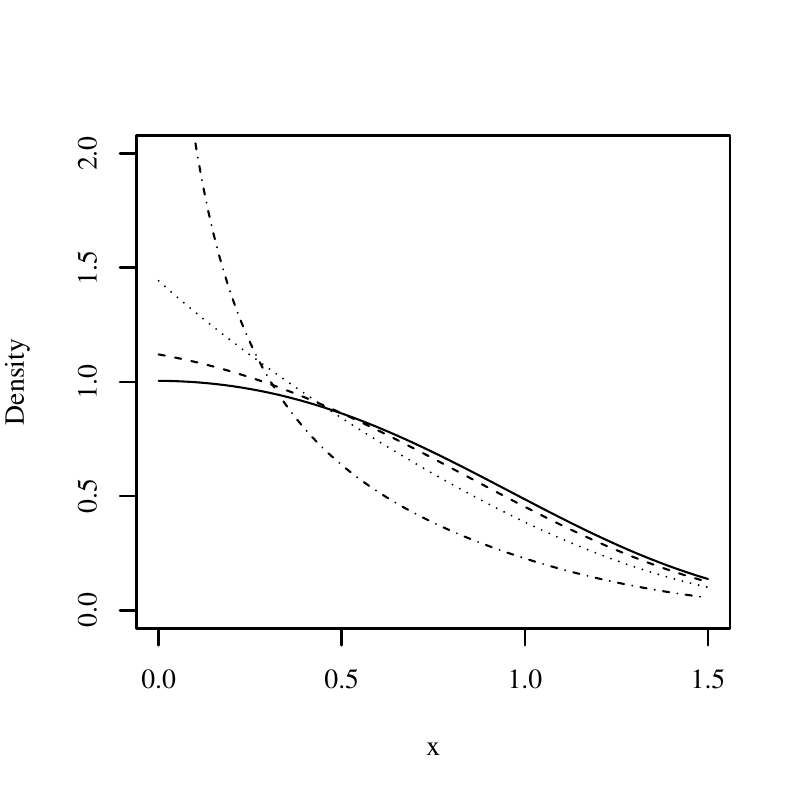}}&\hspace{-0.7cm}
\subfigure[\scriptsize{$\alpha= b = 1.5$} \label{cld2}]{\includegraphics[width=.37\linewidth]{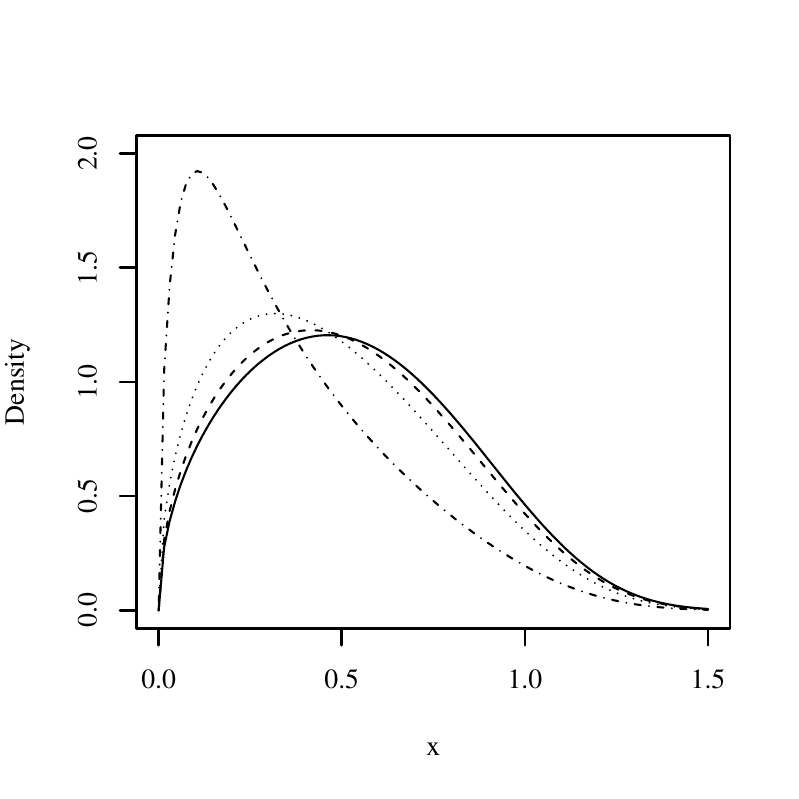}}&\hspace{-0.7cm}
\subfigure[\scriptsize{$\alpha= 2.5 \,\,\mbox{and}\,\, b=3$} \label{cld3}]{\includegraphics[width=.37\linewidth]{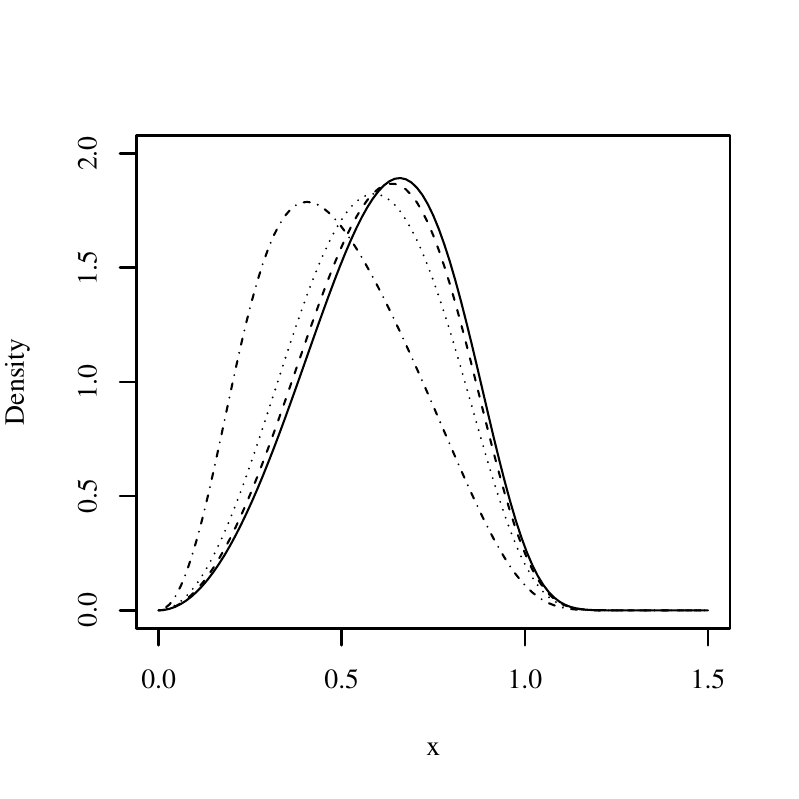}}\\
\end{tabular}
\vspace{-0.25cm}
\caption{Plots of the CL density functions for $\theta = 0.01$ (solid line), $\theta = 0.2$ (dashed line), $\theta = 0.5$ (dotted line) and $\theta = 0.9$ (dotdash line).}
\label{fig:cldensities}
\end{figure}

As expected by proposition~\ref{prop1}, we obtain the Chen distribution as a limiting special case when $\theta \rightarrow 0^+$.

\begin{figure}[!htbp]
\centering
\begin{tabular}{lll}
\hspace{-0.7cm}\subfigure[\scriptsize{$\alpha= 2 \,\, \mbox{and} \,\, b = 1$} \label{clh1}]{\includegraphics[width=.37\linewidth]{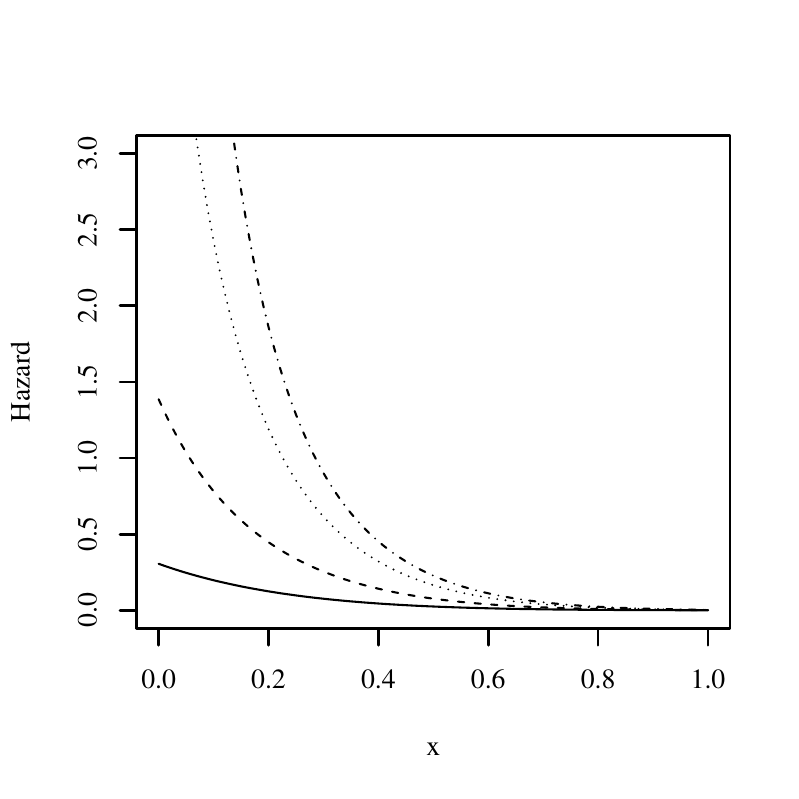}}&\hspace{-0.7cm}
\subfigure[\scriptsize{$\alpha= 3.5 \,\,\mbox{and} \,\, b= 7$} \label{clh2}]{\includegraphics[width=.37\linewidth]{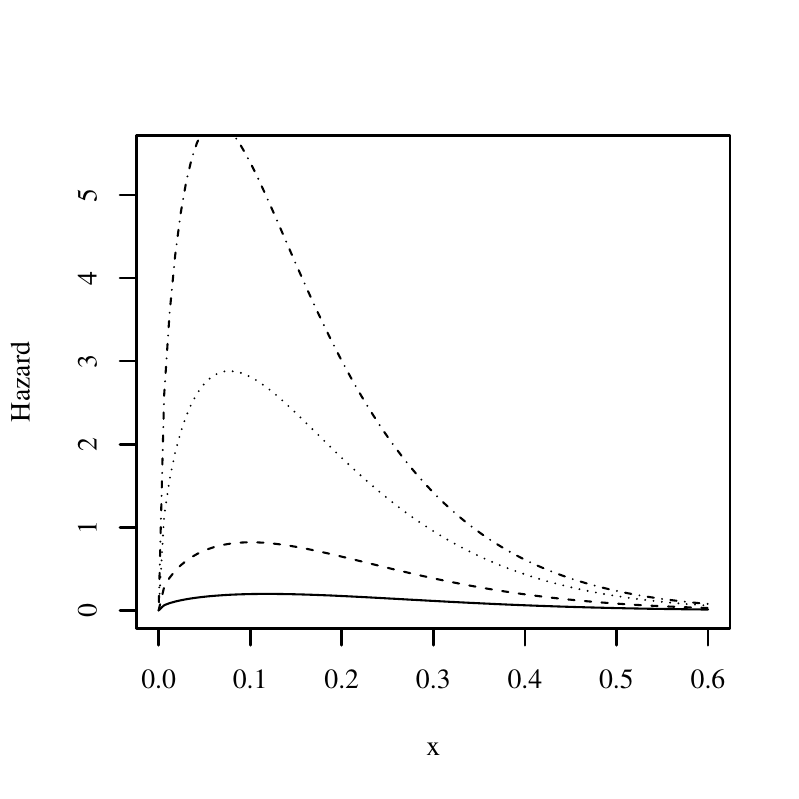}}&\hspace{-0.7cm}
\subfigure[\scriptsize{$\alpha= 3.5 \,\,\mbox{and}\,\, b=1.5$} \label{clh3}]{\includegraphics[width=.37\linewidth]{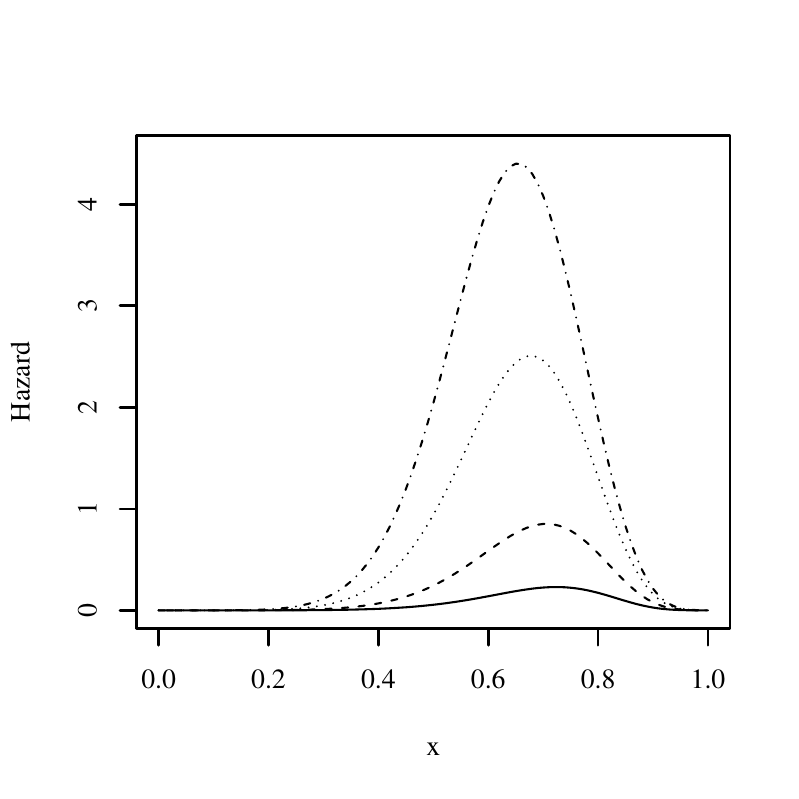}}\\
\end{tabular}
\vspace{-0.25cm}
\caption{Plots of the CL hazard rate functions for $\theta = 0.01$ (solid line), $\theta = 0.2$ (dashed line), $\theta = 0.5$ (dotted line) and $\theta = 0.9$ (dotdash line).}
\label{fig:clhazards}
\end{figure}

The density of the $i$th order statistic $X_{i:m}$ in a random sample of size $m$ from the CL distribution is given by (for $i = 1, \ldots, m$)
\begin{equation*}
f_{i:m}(x) = \frac{m!}{(i-1)! (m-i)!} \sum_{n=1}^\infty \sum_{j=0}^{i-1} \omega_j^*\, p_n\, g(x; n\alpha, b) \left\{\log\left[1-\theta\exp(\alpha-\alpha\operatorname{e}^{x^b})\right]\right\}^{m+j-1},
\end{equation*}
where $g(x; n\alpha, b)$ is the pdf of the Chen distribution with parameters $n\alpha$ and $b$ and $p_n$ denotes the logarithmic probability mass function and $$\omega_j^* = (-1)^j \binom{i-1}{j}\left[\frac{1}{\log(1-\theta)}\right]^{m+j-1}.$$ 
In the same way, the $s$th raw moment of $X_{i:m}$ is obtained directly from 
\begin{equation*}
\operatorname{E}\left(X_{i:m}^s\right) = \frac{m!}{(i-1)! (m-i)!}
\sum_{n=1}^\infty \sum_{j=0}^{i-1} \omega_j\, p_n\,
\operatorname{E}\left\{Z^s \exp\left[n\alpha(m+j-1)(1-\exp(Z^b))\right]\right\},
\end{equation*}
where $Z\sim \mbox{Chen}(n\alpha, b)$. 
\section{Application}

Fonseca and França (2007) studied the soil fertility influence and the characterization of the biologic fixation of $\mathrm{N}_2$ for the \emph{Dimorphandra wilsonii rizz growth}. For 128 plants, they made measures of the phosphorus concentration in the leaves. The data are listed in Table \ref{tab1aplic}. We fit the MWG, Gompertz Poisson (GP), PP,  Chen Poisson (CP) and CL models to these data. We also fit the three-parameter WG distribution introduced by Barreto-Souza \emph{et al}. (2010). The required numerical evaluations are implemented using the SAS (PROCNLMIXED) and R softwares.

\begin{table}[!htbp]
\centering
\footnotesize
\begin{tabular}{rrrrrrrrrrrrr}
\toprule
0.22& 0.17& 0.11& 0.10& 0.15& 0.06& 0.05& 0.07& 0.12& 0.09& 0.23& 0.25& 0.23\\
0.24& 0.20& 0.08& 0.11& 0.12& 0.10& 0.06& 0.20& 0.17& 0.20& 0.11&
0.16& 0.09\\
0.10& 0.12& 0.12& 0.10& 0.09& 0.17& 0.19& 0.21& 0.18& 0.26& 0.19&
0.17& 0.18\\
0.20& 0.24& 0.19& 0.21& 0.22& 0.17& 0.08& 0.08& 0.06& 0.09& 0.22&
0.23& 0.22\\
0.19& 0.27& 0.16& 0.28& 0.11& 0.10& 0.20& 0.12& 0.15& 0.08& 0.12&
0.09& 0.14\\
0.07& 0.09& 0.05& 0.06& 0.11& 0.16& 0.20& 0.25& 0.16& 0.13& 0.11&
0.11& 0.11\\
0.08& 0.22& 0.11& 0.13& 0.12& 0.15& 0.12& 0.11& 0.11& 0.15& 0.10&
0.15& 0.17\\
0.14& 0.12& 0.18& 0.14& 0.18& 0.13& 0.12& 0.14& 0.09& 0.10& 0.13&
0.09& 0.11\\
0.11& 0.14& 0.07& 0.07& 0.19& 0.17& 0.18& 0.16& 0.19& 0.15& 0.07&
0.09& 0.17\\
0.10& 0.08& 0.15& 0.21& 0.16& 0.08& 0.10& 0.06& 0.08& 0.12& 0.13\\
\bottomrule
\end{tabular}
\caption{Phosphorus concentration in leaves data set.}\label{tab1aplic}
\end{table}

\begin{table}[!htbp]
\centering
\footnotesize
\begin{tabular}{ccccccc}
\toprule
Min.  & $Q_{1}$   &  $Q_{2}$&    Mean &  $Q_{3}$  &Max.   &Var.\\
0.0500 &   0.1000 &  0.1300 &   0.1408&  0.1800  &0.2800 &0.0030\\
\bottomrule
\end{tabular}
\caption{Descriptive statistics. }\label{tab2aplic}
\end{table}

Tables \ref{tab2aplic} and \ref{tab3aplic} display some descriptive statistics and the MLEs (with corresponding standard errors in parentheses) of the model parameters. Since the values of the Akaike information criterion (AIC), Bayesian information criterion (BIC) and consistent Akaike information criterion (CAIC) are smaller for the CL distribution compared with those values of the other models, this new distribution seems to be a very competitive model for these data.

\begin{table}[!htbp]
\footnotesize
\centering
\begin{tabular}{lcccccccc}
\toprule
\multicolumn{4}{r}{Estimates}&&\multicolumn{3}{r}{Statistic}\\
\cmidrule{2-5} \cmidrule{7-9}
Model   &$\theta$    &$\alpha$   &$\gamma$ &$\lambda$    & &AIC    &BIC  & AICC\\
\midrule
MWG     &0.7200      &409.07 &3.6545
&$-$0.5727  & &$-$385.6
  &$-$374.2
&$-$385.3
   \\
        &(0.2418)    &(1174.76)  &(0.821)  &(6.6673)     & &       &     &    \\
WG      &0.9995      &2.4471     &4.2041   &$-$ &
&$-$378.5&$-$370.0 & $-$378.3
    \\
       &(0.0017)  &(8.7059)   &(0.3022) &    $-$         & &        &     &
       \\ \cmidrule{2-4}
                &$\theta$             &$\alpha$           & $\beta$ &     & &       &     &
                \\ \cmidrule{2-4}
GP      &2.9478      &0.3169     &19.7047  &     &
& $-$368.7&
        $-$360.2& $-$368.5   \\
                &(1.2627)    &(0.1473)   &(1.6135) &            & &       &     &
                \\ \cmidrule{2-4}
         &$\theta$      & $\alpha$         & $k$      &       &     & &
         &\\ \cmidrule{2-4}
PP &80.0903       &0.0131           &0.0500    &
&&$-$271.4&$-$265.7& $-$271.3\\
         &(69.7770)     &(0.0115)         &      &       &&&&\\
\cmidrule{2-4}
 &$\theta$      & $\alpha$         & $b$      &       &     & &         &\\
\cmidrule{2-4}
CP&15.4386&14.7817& 2.9212& && $-$383.7&
$-$375.2&$-$383.5\\
        &(22.8318)  &(28.1576)&(0.2634)& &&  & &\\
CL &0.9999& 52232&7.5882&  && $-$395.8&$-$387.2
 &$-$395.6\\
 &(0.0001)&(0.0000)&(0.2039)& && & &\\
 \bottomrule
\end{tabular}
\caption{MLEs of the model parameters, the corresponding SEs (given in parentheses) and the statistics AIC, BIC and AICC.}\label{tab3aplic}
\end{table}

Plots of the estimated pdf and cdf of the MWG, WG, GP, PP, CP and CL models fitted to these data are displayed in Figure \ref{fig:cdfplot}. They indicate that the CL distribution is superior to the other distributions in terms of model fitting.

\begin{figure}[!htbp]
\centering
\begin{tabular}{ll}
\hspace{-0.7cm}\subfigure[]{\epsfig{file=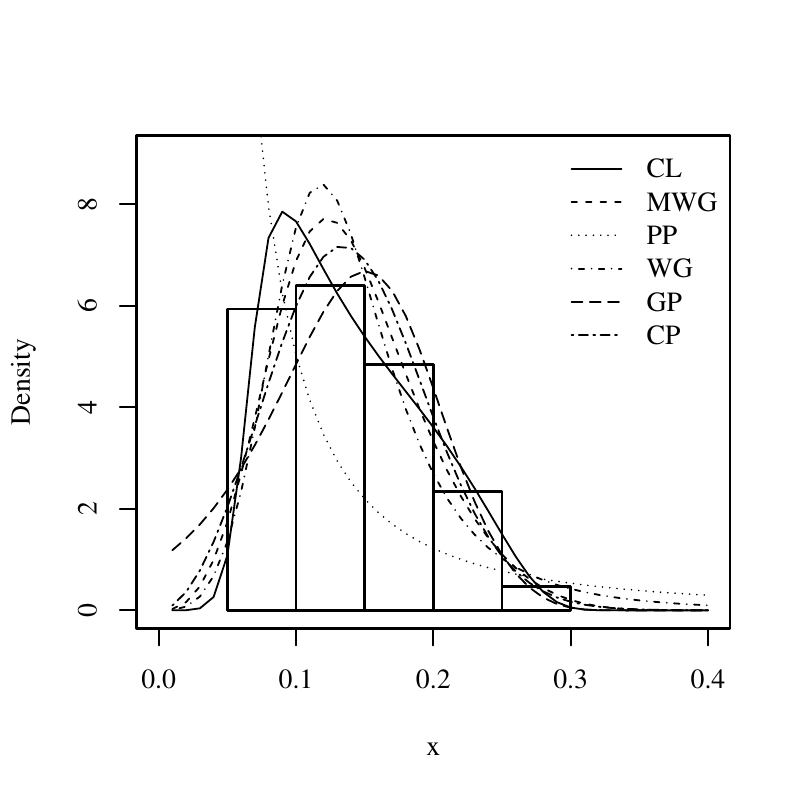,height=0.54\linewidth}}&\hspace{-0.6cm}
\subfigure[]{\epsfig{file=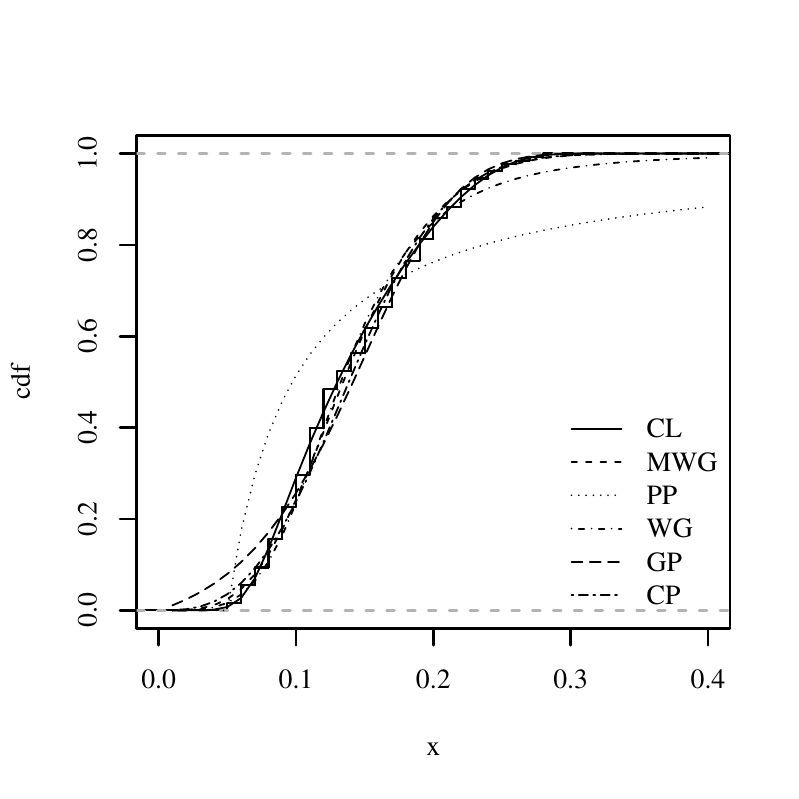,height=0.54\linewidth}}\\
\end{tabular}
\caption{Estimated (a)  pdf and (b) cdf for the CL, MWG, PP, WG, GP and CP models to the percentage of Phosphorus concentration in leaves data.}
\label{fig:cdfplot}
\end{figure}


Table \ref{tab4aplic} lists the values of the Kolmogorov-Smirnov (K-S) statistic and the values of $-2\ell(\widehat{\Theta})$. From these figures, we conclude that the CL distribution provides a better fit to these data than the MWG, WG, GP, PP and CP models.

\begin{table}[!htb]
\footnotesize
\centering
\begin{tabular}{lcccccccccccc}
\toprule Model         &&&& K--S   &&&& $-2\ell(\widehat{\Theta})$\\
\midrule
MWG  &&&&0.0943        &&&&$-$393.6             \\
WG           &&&&0.0873        &&&&$-$384.5            \\
Gompertz Poisson            &&&&0.1201         &&&&$-$374.7            \\
Pareto Poisson              &&&&0.3017         &&&&$-$374.7                \\
Chen Poisson                &&&&0.1159         &&&&$-$389.7                     \\
Chen Logarithmic            &&&&0.0678        &&&&$-$401.8            \\
 \bottomrule
\end{tabular}
\caption{K-S statistics and $-2\ell(\widehat{\Theta})$ for the exceedances of phosphorus concentration in leaves data set.}\label{tab4aplic}
\end{table}

\section{Concluding remarks}

We define a new lifetime class of distributions, called the extended Weibull power series (EWPS), which generalizes the Weibull power series class of distributions proposed by Morais and Barreto-Souza (2011), which in turn extends the exponential power series class of distributions (Chahkandi and Ganjali, 2009). We provide a mathematical treatment of the new
distribution including expansions for the density function, moments, generating function and incomplete moments. Further, explicit expressions for the order statistics and Shannon entropy are derived. The EWPS density function can be expressed as a mixture of EW density functions. This property is important to obtain several other results. Our formulas related with the EWPS model are manageable, and with the use of modern computer resources with analytic and numerical capabilities, they may turn into adequate tools comprising the arsenal of applied statisticians. The estimation of the model parameters is approached by the method of maximum likelihood using the EM algorithm. The observed information matrix is derived. Further, maximum entropy identification for the EWPS distributions was discussed and some special models are studied in some detail. Finally, we fit the EWPS model to a real data set to show the usefulness of the proposed class. We hope that this generalization may attract wider applications in the literature of the fatigue life distributions.

\section*{Acknowledgements}
\addcontentsline{toc}{chapter}{Acknowledgements}  We also gratefully acknowledge financial support from CAPES and CNPq.
\appendix

\section{}
\addcontentsline{toc}{chapter}{Appendix A.}\label{apA}
The elements of the $p \times p$ information matrix $J_n(\Theta)$ are
\begin{align*}
J_{\theta \theta} &= - \frac{n}{\theta^2} - n\left[\frac{C''(\theta)}{C(\theta)} - \left(\frac{C'(\theta)}{C(\theta)}\right)^2\right] + \theta \sum_{i=1}^n \left(\frac{z_{2i}}{z_{1i}}\right)^2 H(x_i;\, \xi) \mathrm{e}^{-2\alpha H(x_i;\, \xi)}\\
&- \theta \sum_{i=1}^n \frac{z_{3i}}{z_{1i}} H(x_i;\, \boldsymbol{\xi}) \mathrm{e}^{-2\alpha H(x_i;\, \boldsymbol{\xi})}\\
J_{\alpha \alpha} &= - \frac{n}{\alpha^2} + \theta \sum_{i=1}^n \frac{z_{2i}}{z_{1i}} H^2(x_i; \boldsymbol{\xi}) \mathrm{e}^{-\alpha H(x_i;\, \xi)} + \theta^2 \sum_{i=1}^n \frac{(z_{3i}-z_{2i}^2)}{z_{1i}}H^2(x_i; \xi) \mathrm{e}^{-2\alpha H(x_i;\, \xi)}\\
J_{\alpha \theta} &= \theta \sum_{i=1}^n \left[\left(\frac{z_{2i}}{z_{1i}}\right)^2 - \frac{z_{3i}}{z_{1i}}\right] H^2(x_i; \xi) \mathrm{e}^{-2\alpha H(x_i;\, \xi)} - \sum_{i=1}^n \frac{z_{2i}}{z_{1i}} H^2(x_i; \xi) \mathrm{e}^{-\alpha H(x_i;\, \xi)} \\
J_{\alpha \boldsymbol{\xi}_k} &= - \sum_{i=1}^n \frac{\partial H(x_i;\, \boldsymbol{\xi})}{\partial \xi_k} - \theta \sum_{i=1}^n \frac{z_{2i}}{z_{1i}}\frac{\partial H(x_i;\, \xi)}{\partial \xi_k} \mathrm{e}^{-\alpha H(x_i;\, \boldsymbol{\xi})} \left[1 - \alpha H(x_i;\, \boldsymbol{\xi})\right] \\
&+ \alpha \theta^2 \sum_{i=1}^n \left[\frac{z_{3i}}{z_{1i}}-\left(\frac{z_{2i}}{z_{1i}}\right)^2 \right] \frac{\partial H(x_i;\, \xi)}{\partial \boldsymbol{\xi}_k}H(x_i;\, \xi)\mathrm{e}^{-2\alpha H(x_i;\, \boldsymbol{\xi})}\\
J_{\theta \boldsymbol{\xi}_k} &= \theta \alpha  \sum_{i=1}^n \left[\left(\frac{z_{2i}}{z_{1i}}\right)^2 - \frac{z_{3i}}{z_{1i}}\right] \frac{\partial H(x_i;\, \boldsymbol{\xi})}{\partial \boldsymbol{\xi}_k} \mathrm{e}^{-2\alpha H(x_i;\, \xi)} - \alpha \sum_{i=1}^n \frac{z_{2i}}{z_{1i}} \frac{\partial H(x_i;\, \boldsymbol{\xi})}{\partial \boldsymbol{\xi}_k} \mathrm{e}^{-\alpha H(x_i;\, \boldsymbol{\xi})}\\
J_{\xi_k \boldsymbol{\xi}_l} &= -\alpha \sum_{i=1}^n \frac{\partial^2 H(x_i;\, \boldsymbol{\xi})}{\partial \boldsymbol{\xi}_k \partial \boldsymbol{\xi}_l} - \sum_{i=1}^n  \frac{1}{H(x_i;\, \boldsymbol{\xi})^2}\frac{\partial H(x_i;\, \boldsymbol{\xi})}{\partial \boldsymbol{\xi}_k}\frac{\partial H(x_i;\, \boldsymbol{\xi})}{\partial \boldsymbol{\xi}_l} + \sum_{i=1}^n  \frac{1}{H(x_i;\, \boldsymbol{\xi})}\frac{\partial^2 H(x_i;\, \boldsymbol{\xi})}{\partial \boldsymbol{\xi}_k \partial \boldsymbol{\xi}_l}\\
&+ (\alpha \theta)^2 \sum_{i=1}^n \left[\left(\frac{z_{2i}}{z_{1i}}\right)^2 + \frac{z_{3i}}{z_{1i}}\right] \frac{\partial H(x_i;\, \boldsymbol{\xi})}{\partial \xi_k} \frac{\partial H(x_i;\, \boldsymbol{\xi})}{\partial \boldsymbol{\xi}_l} \mathrm{e}^{-2\alpha H(x_i;\, \boldsymbol{\xi})} \\
&- \alpha \theta \sum_{i=1}^n \frac{z_{2i}}{z_{1i}} \frac{\partial^2
H(x_i;\, \xi)}{\partial \boldsymbol{\xi}_k \partial
\boldsymbol{\xi}_l} \mathrm{e}^{-\alpha H(x_i;\, \boldsymbol{\xi})} + \alpha^2
\theta \sum_{i=1}^n \frac{z_{2i}}{z_{1i}} \frac{\partial H(x_i;
\xi)}{\partial \boldsymbol{\xi}_k} \frac{\partial H(x_i;
\boldsymbol{\xi})}{\partial \boldsymbol{\xi}_l}\mathrm{e}^{-\alpha H(x_i;
\boldsymbol{\xi})}
\end{align*}
where $z_{1i} = C'(\theta e^{-\alpha H(x_i;\, \boldsymbol{\xi})}),
z_{2i} = C''(\theta \mathrm{e}^{-\alpha H(x_i;\, \boldsymbol{\xi})})$ and
$z_{3i} = C'''(\theta \mathrm{e}^{-\alpha H(x_i;\, \boldsymbol{\xi})})$, for $i
= 1, \ldots, n$.

\section*{References}

\end{document}